\newcommand{\commentsymbol}{//}
\algrenewcommand\algorithmiccomment[1]{\hfill \commentsymbol{} #1}
\newcommand{\LineComment}[2][\algorithmicindent]{\Statex \hspace{#1}\commentsymbol{} #2}
\newtheorem{theorem}{Theorem}[section]
\newtheorem{claim}[theorem]{Claim}
\theoremstyle{definition}
\newtheorem{definition}[theorem]{Definition}
\DeclareMathOperator{\poly}{poly}
\definecolor{mygreen}{RGB}{20,140,80}
\definecolor{linkcolor}{RGB}{0,0,230}
\definecolor{mylightgray}{RGB}{230,230,230}
\definecolor{verylightgray}{RGB}{245,245,245}
\newcounter{myalgctr}
\newtcolorbox{OuterBox}[1][]{%
    breakable,
    enhanced,
    frame hidden,
    interior hidden,
    left=-5pt,
    right=-5pt,
    top=-5pt,
    float=p,
    boxsep=0pt,
    arc=0pt
#1}%
\newtcolorbox{InnerBox}[1][]{%
    enforce breakable,
    enhanced,
    colback=gray,
    colframe=white,
#1}%
\newenvironment{tbox}{
\vspace{0.2cm}
\begin{tcolorbox}[width=\columnwidth,
                  enhanced,
                  boxsep=2pt,
                  left=1pt,
                  right=1pt,
                  top=4pt,
                  boxrule=1pt,
                  arc=0pt,
                  colback=white,
                  colframe=black,
	              breakable
                  ]
}{
\end{tcolorbox}
}
\newcommand{\tboxhrule}[0]{\vspace{0.1cm} {\color{black} \hrule} \vspace{0.2cm}}
\newenvironment{titledtbox}[1]{\begin{tbox}#1 \tboxhrule}{\end{tbox}}
\setlist[1]{itemsep=-.1pt}
\crefname{theorem}{Theorem}{Theorems}
\Crefname{lemma}{Lemma}{Lemmas}
\Crefname{claim}{Claim}{Claims}
\Crefname{observation}{Observation}{Observations}
\Crefname{algorithm}{Algorithm}{Algorithms}
\Crefname{myalgctr}{Algorithm}{Algorithms}
\Crefname{challenge}{Challenge}{Challenges}
\title{Algorithms for the Minimum Dominating Set Problem in Bounded Arboricity Graphs: Simpler, Faster, and Combinatorial}
\author{Adir Morgan\footnote{Email: adirmorgan@gmail.com} \\Tel Aviv University \and Shay Solomon\footnote{Partially supported by the Israel Science Foundation grant No.1991/19. Email: solo.shay@gmail.com} \\ Tel Aviv University \and Nicole Wein\footnote{Supported by NSF Grant CCF-1514339. Email: nwein@mit.edu} \\ MIT}
\date{}
\begin{document}
\maketitle

\begin{abstract}
We revisit the minimum dominating set problem on graphs with arboricity bounded by $\alpha$.
In the (standard) centralized setting, Bansal and Umboh~\cite{bansal2017tight} gave an $O(\alpha)$-approximation LP rounding algorithm, which also translates into a near-linear time algorithm using general-purpose approximation results for explicit mixed packing and covering or pure covering LPs
\cite{koufogiannakis2014nearly,young2014nearly,allen2019nearly,quanrud2020nearly}.
Moreover, \cite{bansal2017tight} showed that it is NP-hard to achieve an asymptotic improvement for the approximation factor. 
On the other hand, the previous two {\em non}-LP-based algorithms, by Lenzen and Wattenhofer~\cite{LW10}, and Jones et al.~\cite{jones2013parameterized}, achieve an approximation factor of $O(\alpha^2)$ in {\em linear} time.
 
There is a similar situation in the distributed setting: While there is an $O(\log^2 n)$-round LP-based  $O(\alpha)$-approximation algorithm implied in~\cite{kuhn2006price}, the best non-LP-based algorithm by Lenzen and Wattenhofer~\cite{LW10} is an implementation of their centralized algorithm, providing an $O(\alpha^2)$-approximation within $O(\log n)$ rounds.

We address the questions of whether one can achieve an
$O(\alpha)$-approximation algorithm that is
{\em elementary}, i.e., not based on any LP-based methods, either in the centralized setting or in the distributed setting.
We resolve both questions in the affirmative, and en route achieve algorithms that are faster than the state-of-the-art LP-based algorithms. More specifically, our contribution is two-fold:
\begin{enumerate}
\item In the centralized setting, we provide a surprisingly simple combinatorial algorithm that is asymptotically optimal in terms of both approximation factor and running time: an $O(\alpha)$-approximation in {\em linear} time. 
The previous state-of-the-art  $O(\alpha)$-approximation algorithms are (1) LP-based,  (2) more complicated, and (3) have super-linear running time. 
\item Based on our centralized algorithm, we design a distributed combinatorial $O(\alpha)$-approximation algorithm in the $\mathsf{CONGEST}$ model that runs in $O(\alpha\log n )$ rounds with high probability. 
Not only does this result provide the first nontrivial {\em non}-LP-based distributed $o(\alpha^2)$-approximation algorithm for this problem, it also outperforms the best LP-based distributed algorithm for a wide range of parameters.
\end{enumerate}

\end{abstract}

\setcounter{page}0
\thispagestyle{empty}
\clearpage

\section{Introduction}

\subsection{Background}
The minimum dominating set (MDS) problem is a classic combinatorial optimization problem. Given a graph $G$ we want to find a minimum cardinality set $D$ of vertices, such that every vertex of the graph is either in $D$ or has a neighbor in $D$.  Besides its theoretical implications, solving this basic problem efficiently has many practical applications in domains ranging from wireless networks to text summarizing (see, e.g., \cite{wan2002distributed, nacher2016minimum, SL10}).
The MDS problem was one of the first problems recognized as NP-complete \cite{garey1979guide}. It was also one of the first problems for which an approximation algorithm was analyzed: a simple greedy algorithm achieves a $\ln n$-approximation in general graphs~\cite{johnson1974approximation}. This approximation factor is optimal up to lower order terms unless $\mathsf{P}=\mathsf{NP}$~\cite{dinur2014analytical}. 

\paragraph{Distributed MDS in general graphs} The first efficient distributed approximation algorithm for MDS was given by Jia, Rajaraman, and Suel~\cite{jia2002efficient}, who gave a randomized $O(\log \Delta)$-approximation in $O(\log^2 n)$ rounds in the $\mathsf{CONGEST}$ model. This was improved by Kuhn, Moscibroda and Wattenhofer \cite{kuhn2006price}, who gave a randomized  $(1+\varepsilon)(1+\ln(\Delta+1))$-approximation in $O(\log^2 \Delta/\varepsilon^4)$ rounds in the $\mathsf{CONGEST}$ model and in $O(\log n /\varepsilon^2)$ rounds in the $\mathsf{LOCAL}$ model. Ghaffari, Kuhn, and Maus~\cite{ghaffari2017complexity} showed that by allowing exponential-time local computation, one can get a randomized $(1+o(1))$-approximation in a polylogarithmic number of rounds in the $\mathsf{LOCAL}$ model. This result was derandomized by the network decomposition result of Rozho{\v{n}} and Ghaffari \cite{rozhovn2020polylogarithmic}. From the lower bounds side, Kuhn, Moscibroda, and Wattenhofer \cite{kuhn2016local} showed that getting a polylogarithmic approximation ratio requires $\Omega\big(\sqrt{\frac{\log n}{\log\log n}}\big)$ and $\Omega\big(\frac{\log \Delta}{\log\log\Delta}\big)$ rounds in the $\mathsf{LOCAL}$ model.


For \emph{deterministic} distributed algorithms, improving over previous work, Deurer, Kuhn, and Maus \cite{deurer2019deterministic} recently gave two algorithms in the $\mathsf{CONGEST}$ model with approximation factor $(1+\varepsilon)\ln(\Delta + 1)$ for $\varepsilon>1/\text{polylog}\Delta$, running in  $2^{O(\sqrt{\log n \log\log n})}$ and $O((\Delta+\log^*n)\text{polylog}\Delta)$ rounds, respectively; the running time of the former $\mathsf{CONGEST}$ algorithm \cite{ghaffari2018derandomizing}, achieving approximation factor $O(\log^2 n)$,
is dominated by the time needed for deterministically  computing  a  network  decomposition  in  the  $\mathsf{CONGEST}$ model, which, due to \cite{ghaffari2021improved}, is thus reduced to $O(\poly \log n)$.




\paragraph{Graphs of bounded arboricity} The MDS problem has been studied on a variety of restricted classes of graphs, such as graphs with bounded degree (e.g., \cite{chlebik2008approximation}), planar and bounded genus graphs (e.g., \cite{baker1994approximation, czygrinow2008fast,amiri2019distributed}), and graphs of bounded arboricity --- which is the focus of this paper. The class of bounded \emph{arboricity} graphs is a wide family of \emph{uniformly sparse} graphs, defined as follows:

\begin{definition} Graph $G$ has {\em arboricity} bounded by $\alpha$ if for every $S\subseteq V$, it holds that $\frac{m_s}{n_s-1}\leq \alpha$, where $m_s$ and $n_s$ are the number of edges and vertices in the subgraph induced by $S$, respectively.
\end{definition}

The class of bounded arboricity graphs contains the other graph classes mentioned above as well as bounded treewidth graphs, and in general all graphs excluding a fixed minor. Moreover, many natural and real world graphs, such as the world wide web graph, social networks and transaction networks, are believed to have bounded arboricity.
Consequently, this class of graphs has been subject to extensive research, which led to many algorithms for bounded arboricity graphs in both the (classic) centralized setting (e.g. \cite{eppstein1994arboricity, goel2006bounded, chiba1985arboricity}) and in the distributed setting (e.g. \cite{czygrinow2009fast, barenboim2010sublogarithmic,GS17,SuV20});
there are also many algorithms in other settings, such as dynamic graph algorithms, sublinear algorithms and streaming algorithms (see \cite{BF99,HTZ14,PS16,parter2016local,OSSW18,solomon2018improved,kaplan2021dynamic,ELR18,ERR19,ERS20,BPS20,MV18,BS20,bera2020graph}, and the references therein).

In distributed settings, one cannot always assume that all processors know the arboricity of the graph, so it is important to devise robust algorithms, which can perform correctly also when the arboricity is unknown to the processors (see e.g.~\cite{barenboim2010sublogarithmic, LW10}).

\subsection{Approximating MDS on graphs of arboricity $\alpha$}
\paragraph*{Centralized setting}
In the centralized setting, there are two non-LP-based
algorithms for MDS for graphs of arboricity (at most) $\alpha$ (for brevity, in what follows we may write graphs of ``arboricity $\alpha$'' instead of arboricity {\em at most} $\alpha$). One is by Lenzen and Wattenhofer~\cite{LW10}, the other is by Jones,  Lokshtanov, Ramanujan, Saurabh, and Such{\`y}~\cite{jones2013parameterized}, and both achieve an $O(\alpha^2)$-approximation in deterministic linear time\footnote{Note that the theorem statement of~\cite{LW10} has a typo suggesting that the approximation factor is $O(\alpha)$.}. There is also a very simple LP rounding algorithm by Bansal and Umboh that gives a $3\alpha$-approximation~\cite{bansal2017tight}. This algorithm is very simple, after the LP has been solved. To solve the LP, there are near-linear time general-purpose approximation algorithms for explicit mixed packing and covering or pure covering LPs
\cite{koufogiannakis2014nearly,young2014nearly,allen2019nearly,quanrud2020nearly}. Combining such an algorithm with \cite{bansal2017tight} yields an $O(\alpha)$-approximation for MDS, either deterministically within $O(m \log n)$ time \cite{young2014nearly} or randomly (with high probability) within $O(n \log n + m)$ time \cite{koufogiannakis2014nearly}.
The latter  bound is super-linear in the entire (non-degenerate) regime of arboricity $\alpha = o(\log n)$;
the regime $\alpha = \Omega(\log n)$ is considered degenerate, since in that case one can use the greedy linear-time $\ln n$-approximation algorithm.
Bansal and Umboh~\cite{bansal2017tight} also proved that achieving asymptotically better approximation is NP-hard.\footnote{More specifically, achieving an $(\alpha-1-\varepsilon)$-approximation is NP-hard for any $\varepsilon>0$ and any fixed $\alpha$; achieving an $(\lfloor \alpha / 2 \rfloor - \varepsilon)$-approximation is NP-hard for any $\varepsilon > 0$ and any $\alpha = 1,\ldots,\log^\delta n$, for some constant $\delta$ \cite{bansal2017tight,DGKR05}.
These hardness of approximation results are achieved by applying a reduction by \cite{bansal2017tight}
from the $k$-hypergraph vertex cover ($k$-HVC)
problem (where we need to find a minimum vertex cover of a $k$-uniform hypergraph) to the MDS problem in arboricity-$k$ graphs, in conjunction with NP-hardness results by \cite{DGKR05} for the $k$-HVC problem.}


\paragraph*{Distributed setting} In the distributed setting, there are two 
non-LP-based 
algorithms for MDS for graphs of arboricity $\alpha$, both by Lenzen and Wattenhofer~\cite{LW10}. The first is a randomized $O(\alpha^2)$-approximation algorithm in the $\mathsf{CONGEST}$ model that runs in $O(\log n)$ rounds with high probability. This algorithm was made deterministic by Amiri \cite{amiri2021deterministic}, and uses an LP-based subroutine of Even, Ghaffari, and Medina \cite{even2018distributed}. The second algorithm of Lenzen and Wattenhofer is a deterministic $O(\alpha\log \Delta)$-approximation algorithm in the $\mathsf{CONGEST}$ model that runs in $O(\log\Delta)$ rounds, where $\Delta$ is the maximum degree.

Regarding LP-based algorithms, Kuhn, Moscibroda, and Wattenhofer~\cite{kuhn2006price} developed a general-purpose method for solving LPs of a particular structure in the distributed setting. It seems that by applying their method (specifically, Corollary 4.1 of~\cite{kuhn2006price}) to the LP approximation result of Bansal and Umboh in bounded arboricity graphs~\cite{bansal2017tight}, one can get a deterministic $O(\alpha)$-approximation algorithm for MDS in the $\mathsf{CONGEST}$ model that runs in $O(\log^2 \Delta)$ rounds, but such a result has not been explicitly claimed in the literature.

\paragraph*{A natural question}
The aforementioned results demonstrate a significant gap for MDS algorithms in bounded arboricity graphs when comparing LP-based methods to elementary {\em combinatorial} approaches. It is natural to ask whether this gap can be bridged. 

\begin{itemize}
\item 
In the centralized setting, is there any efficient non-LP-based $O(\alpha)$-approximation algorithm for MDS (even one that is slower than the aforementioned $O(m \log n)$ time deterministic and $O(n\log n + m)$ time randomized LP-based algorithms)? Further, can one achieve an $O(\alpha)$-approximation in {\em linear time} using {\em any} (even LP-based) algorithm?

\item
In the distributed setting, is there any efficient non-LP-based distributed $O(\alpha)$-approximation algorithm for MDS?
Further, can one achieve an $O(\alpha)$-approximation in the $\mathsf{CONGEST}$ model within $o(\log^2 \Delta)$ rounds using {\em any} (even LP-based) algorithm?
\end{itemize}

We note the caveat that there is no clear-cut distinction between combinatorial and non-combinatorial algorithms, but we operate under the premise that an algorithm is combinatorial if all its intermediate computations have a natural combinatorial interpretation in terms of the original problem. While all algorithms presented in this paper are certainly combinatorial under this premise, it is far less clear whether prior work is. In particular, the previous state-of-the-art LP-based approaches are based on
general-purpose primal/dual methods; when restricted to the MDS problem, it is possible that these methods could reduce, after proper adaptations, into simpler combinatorial algorithms. Nonetheless, even if possible, it is unlikely that the resulting algorithm would be as simple and elementary as ours. 
In the distributed setting, Kuhn and Wattenhofer \cite{kuhn2005constant} give an LP-based algorithm specifically for MDS that is simpler than the subsequent general-purpose LP-based algorithm of Kuhn, Moscibroda, and Wattenhofer \cite{kuhn2006price}; however,  \cite{kuhn2005constant} is inferior to \cite{kuhn2006price} in both approximation ratio and running time.

\subsection{Our Contributions}

We answer all parts of the above question in the affirmative. In particular, we give algorithms that achieve the asymptotically optimal approximation factor of $O(\alpha)$,
and are not only simple and elementary, but also run faster than all known algorithms, including LP-based algorithms.\footnote{$O(\alpha)$ is the asymptotically optimal approximation factor for polynomial time algorithms in the centralized setting and also in distributed settings where processors are assumed to have polynomially-bounded processing power.}

\subsubsection*{Centralized Setting}

Our core contribution is an asymptotically optimal algorithm in the centralized setting.

\begin{restatable}{theorem}{thmstatic}
\label{thm:static}
For graphs of arboricity $\alpha$, there is an $O(m)$ time $O(\alpha)$-approximation algorithm for MDS.
\end{restatable}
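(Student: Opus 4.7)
The plan is to combine a low-out-degree orientation with a simple local selection rule, analyzed by a charging argument that replaces the Bansal--Umboh LP rounding with a purely combinatorial split. First, I would compute in $O(m)$ time an acyclic orientation of $G$ with maximum out-degree at most $2\alpha$: this is the standard linear-time peeling procedure that iteratively removes a vertex of current degree at most $2\alpha$ (which must exist since arboricity $\alpha$ implies average degree at most $2\alpha$ in every subgraph) and orients its incident edges outward; using bucket queues each edge is touched a constant number of times, so the total running time is $O(m)$. For each vertex $v$, let $N^+(v)$ and $N^-(v)$ denote its out- and in-neighbors and $H(v) := \{v\} \cup N^+(v)$ its out-closure; then $|H(v)| \leq 2\alpha+1$, and the domination constraint $N[v] \cap D \neq \emptyset$ factors as ``$H(v) \cap D \neq \emptyset$ (out-satisfied) or $N^-(v) \cap D \neq \emptyset$ (in-satisfied)''.

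Second, I would build $D$ by a simple local rule that emulates the Bansal--Umboh rounding combinatorially. For each vertex $v$, a local feature of the orientation---for instance, a threshold on $|N^-(v)|$ relative to $\Theta(\alpha)$, or a comparison between the out-closure and in-neighborhood---determines whether $v$ is handled in the out- or in-regime, and accordingly either a representative from $H(v)$ or an in-neighbor from $N^-(v)$ is added to $D$. Since each vertex does work proportional to its degree, this step also runs in $O(m)$ time. The approximation bound would then be proven by a charging argument: for each $v \in D$, pick a witness $u \in D^* \cap N[v]$ (which exists since $D^*$ dominates $v$) and charge $v$ to $u$; the charges received by a given $u \in D^*$ split by the direction of the edge between $v$ and $u$, with self-charges and out-neighbor charges contributing at most $1 + |N^+(u)| \leq 2\alpha+1$, and in-neighbor charges bounded by exploiting the selection rule---paralleling the Bansal--Umboh observation that the number of in-regime vertices for which $u$ can serve as a witness is at most $|N^+(u)| \leq 2\alpha$. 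Summing over $u \in D^*$ yields $|D| \leq O(\alpha) |D^*|$.

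The main obstacle is designing the classification of the second step so that the in-neighbor-charge analysis yields an $O(\alpha)$ bound rather than the $O(\alpha^2)$ of the earlier non-LP algorithms of Lenzen--Wattenhofer and Jones et al. A naive rule such as ``add every vertex with at least one in-neighbor'' can make $|D \cap N[u]|$ blow up for hub vertices (since in-degree is unbounded in a $2\alpha$-orientation), so the correct local rule must exploit the peeling order to keep most in-neighbors of any $D^*$-vertex outside $D$---ideally as sources of the orientation, which are automatically dominated by their out-neighbors. Combined with the $2\alpha$ bound on out-degree, this should combinatorially, and in linear time, reproduce the out/in split that drives the $O(\alpha)$-approximation in the LP-based approach.
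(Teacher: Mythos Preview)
Your proposal identifies a reasonable framework but leaves the decisive step unspecified. The orientation-plus-charging skeleton you describe is essentially the structure underlying both Bansal--Umboh and the earlier $O(\alpha^2)$ combinatorial algorithms; the entire difficulty lies in the selection rule, which you explicitly flag as ``the main obstacle'' without resolving it. Saying the rule should ``exploit the peeling order to keep most in-neighbors of any $D^*$-vertex outside $D$'' is a restatement of the goal, not a method: since in-degree in a $2\alpha$-orientation is unbounded, no static local threshold on $|N^-(v)|$ yields the in-neighbor-charge bound you need, and Bansal--Umboh's rounding threshold genuinely uses the fractional LP values, for which you have not supplied a combinatorial surrogate. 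As written, the plan does not cross the $O(\alpha^2)$ barrier of Lenzen--Wattenhofer and Jones et al.

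The paper's approach is structurally different from your static-orientation plan and supplies exactly the missing idea. It runs an \emph{iterative} process maintaining the partition $V = D \cup B \cup W$ (dominating set, dominated, undominated), with $B$ split by degree-to-$W$ into $B_{low}$ and $B_{high}$, and repeatedly picks a vertex $w \in W$ of degree at most $\delta\alpha$ in $G[W \cup B_{high}]$. The key innovation is a \emph{voting} mechanism: instead of adding $w$'s $O(\alpha)$ relevant neighbors to $D$ immediately (which is what produces $O(\alpha^2)$ in prior work), $w$ merely increments a counter on each such neighbor, and a vertex enters $D$ passively only after its counter reaches $\delta\alpha$. The charging argument then shows each $u \in OPT$ witnesses at most $2\delta\alpha$ active insertions (at most $\delta\alpha$ while $u \in B_{low}$ by the degree bound, and at most $\delta\alpha$ while $u \in W \cup B_{high}$ before $u$'s own counter saturates), and a separate pigeonhole argument gives $|D_{passive}| \le |D_{active}|$ (each passive vertex has $\ge \delta\alpha$ active neighbors, each active vertex has $\le \delta\alpha$ passive neighbors). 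This counter-based amortization is the combinatorial replacement for the LP rounding; it is dynamic and has no analogue in your one-shot orientation scheme.
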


We note that our algorithm works even when $\alpha$ is not known a priori, since there is a linear time 2-approximation algorithm for computing the arboricity of a graph~\cite{arikati1997efficient}.

Our algorithm is asymptotically optimal in both running time and approximation factor: it runs in linear time, and asymptotically improving the approximation factor it gets is proved to be NP-hard~\cite{bansal2017tight}. (The constant in the approximation ratio is not tight; our algorithm gives an $8\alpha$-approxmation.)
While the quantitative improvement in running time over prior work is admittedly minor (a logarithmic factor over the deterministic algorithm, and $\log n / \alpha$ over the randomized algorithm), still getting a truly linear time algorithm is qualitatively very different than an almost-linear time. Indeed, the study of linear time algorithms has received much attention over the years, even when it comes to shaving factors that grow as slowly as inverse-Ackermann type functions.
This line of work includes celebrated breakthroughs in computer science: For example, for the Union-Find data structure, efforts to achieve a linear time algorithm led to a lower bound showing that inverse-Ackermann function dependence is necessary~\cite{fredman1989cell}, matching the upper bound~\cite{tarjan1975efficiency}, which is a cornerstone result in the field. Another example is MST, where the inverse-Ackermann function was shaved from the upper bound of~\cite{chazelle2000minimum} to achieve a linear time algorithm either using randomization~\cite{karger1995randomized} or when the edge weights are integers represented in binary~\cite{fredman1990trans}, but it remains a major open problem whether or not there exists a linear time deterministic comparison-based MST algorithm.

\subsubsection*{Distributed Setting}

We demonstrate the applicability of our centralized algorithm, by using its core ideas to develop a distributed algorithm.

\begin{restatable}{theorem}{thmimproved}
\label{thm:improved}
For graphs of arboricity $\alpha$, there is a randomized distributed algorithm in the $\mathsf{CONGEST}$ model that gives an $O(\alpha)$-approximation for MDS and runs in $O(\alpha\log n)$ rounds. The bound on the number of rounds holds with high probability (and in expectation). The algorithm works even when either $\alpha$ or $n$ is unknown to each processor.
\end{restatable}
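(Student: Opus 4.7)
The plan is to distribute the centralized algorithm of Theorem~\ref{thm:static} by combining a fast low-out-degree orientation subroutine with a randomized, layered emulation of its local decision rule. I would organize the proof around three ingredients: (i) computing an acyclic orientation with maximum out-degree $O(\alpha)$ in $O(\log n)$ rounds; (ii) recasting the centralized algorithm's decisions as purely local rules along this orientation; and (iii) compressing each ``meta-iteration'' of the centralized algorithm into $O(\alpha)$ $\mathsf{CONGEST}$ rounds, with $O(\log n)$ such iterations.

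For ingredient (i), I would use a standard distributed peeling scheme: in each super-round every vertex whose current residual degree is at most $2\alpha$ peels itself off, orients its remaining incident edges outward, and notifies its neighbors. The arboricity bound implies that a constant fraction of surviving vertices peel off in each super-round, so $O(\log n)$ super-rounds suffice and the resulting maximum out-degree is at most $2\alpha$. Unknown $\alpha$ can be handled by a standard doubling-guess scheme; unknown $n$ is handled by each vertex halting locally once it has peeled itself, avoiding any global synchronization.

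For ingredients (ii) and (iii), I would exploit the fact that each vertex has at most $O(\alpha)$ out-neighbors, so it can collect full information about its out-neighborhood in $O(\alpha)$ $\mathsf{CONGEST}$ rounds. Mirroring the centralized algorithm, in each meta-iteration each still-undominated vertex examines its out-neighborhood and, based on the number of out-neighbors not yet dominated, tosses a biased coin to decide whether to add itself to the dominating set. The probabilistic choice would be calibrated so that in expectation a constant fraction of undominated vertices become dominated per meta-iteration (in the spirit of Luby-style MIS arguments), yielding $O(\log n)$ meta-iterations with high probability. The approximation guarantee would follow by amortized charging that mirrors the centralized analysis of Theorem~\ref{thm:static}: each vertex added to the dominating set is either charged to an optimal dominator in its out-neighborhood (of size $O(\alpha)$) or to its own role as a near-optimal dominator of several out-neighbors.

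The main obstacle I anticipate is the asymmetry between out-degree and in-degree in the orientation: while out-degrees are capped at $O(\alpha)$, in-degrees are unbounded, so any step that requires a vertex to aggregate or broadcast information over all its in-neighbors is prohibitively expensive in $\mathsf{CONGEST}$. I would overcome this by insisting that all informative communication flows along out-edges, and by using randomized sampling with Chernoff concentration to replace any would-be computation over in-neighbors, calibrating the sampling probabilities so that the randomness introduces only a constant-factor loss over the centralized $O(\alpha)$-approximation while keeping each meta-iteration within $O(\alpha)$ rounds. Verifying this calibration, and establishing that the local coin-flips together with the charging argument yield the claimed approximation ratio with high probability, is where I expect the most delicate part of the analysis to lie.
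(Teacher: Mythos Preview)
Your proposal diverges substantially from the paper's proof, and as written it has a genuine gap in the approximation analysis.

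The paper never computes an orientation. Instead it keeps the centralized partition $D,B_{low},B_{high},W_{low},W_{high}$ and, in each iteration, lets every $v\in W_{low}$ pick a random value $p(v)$; $v$ enters $D$ if $p(v)$ is an \emph{$\alpha$-minimum}, meaning that for every neighbor $u\in B_{high}\cup W_{high}$, $p(v)$ is among the $\alpha$ smallest values $u$ sees. Crucially, the \emph{counter (voting) mechanism} from the centralized algorithm is retained: whenever such a $v$ enters $D$ it increments the counters of its neighbors in $W\cup B_{high}$, and any vertex whose counter reaches $\delta\alpha$ also enters $D$. The approximation bound is then exactly the centralized argument (\cref{claim:active} and \cref{claim:passive}), unchanged. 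The round bound comes from a potential (``available weight'') on edges of $G[W\cup B_{high}]$ that drops by an expected $1/O(\alpha)$ fraction per iteration.

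Your charging scheme, by contrast, breaks at the point you flagged as delicate. You propose to charge each $v$ added to $D$ to ``an optimal dominator in its out-neighborhood.'' But an $O(\alpha)$-out-degree orientation gives no control over in-degrees: an OPT vertex $u$ can have unboundedly many in-neighbors, each of which may list $u$ as its only OPT dominator and try to charge to $u$. Nothing in the orientation limits how many such charges $u$ absorbs, so you cannot conclude $|D|=O(\alpha)\cdot|OPT|$ from this alone. This is precisely the obstacle the paper's counter mechanism is designed to overcome: once $u$'s counter hits $\delta\alpha$, $u$ itself enters $D$ and stops being a valid witness, capping the charges to $u$ at $O(\alpha)$. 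Your sketch omits any analogue of this mechanism, so as stated it would at best recover the $O(\alpha^2)$ approximation of \cite{LW10,jones2013parameterized}, not $O(\alpha)$.

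Separately, the round analysis is underspecified. The claim that ``a constant fraction of undominated vertices become dominated per meta-iteration'' via biased coins is asserted but not argued; in the paper this is the most technical step, requiring the $\alpha$-minimum relaxation and a nontrivial edge-weight potential to get $O(\alpha\log n)$ iterations rather than the $O(\alpha^2\log n)$ one gets from a straightforward MIS-style argument.
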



For the ``interesting'' parameter regime where $\Delta$ is polynomial in $n$, and $\alpha=o(\log n)$, the number of rounds in our algorithm beats the prior work obtained by combining~\cite{kuhn2006price} and~\cite{bansal2017tight} which appears to run in $O(\log^2 \Delta)$ rounds; as noted already, such an algorithm has not been claimed explicitly before. We note the caveat that our algorithm is randomized while their algorithm appears to be deterministic.

In the process of obtaining our distributed algorithm, we also obtain a \emph{deterministic} algorithm in the $\mathsf{LOCAL}$ model (with polynomial message sizes) in a polylogarithmic number of rounds, via reduction to the maximal independent set (MIS) problem:

\begin{restatable}{theorem}{thmlocal}
\label{thm:local}
Suppose there is a deterministic (resp., randomized) distributed algorithm in the $\mathsf{LOCAL}$ model for computing an MIS on a general graph in $R(n)$ rounds. Then, for graphs of arboricity $\alpha$, there is a deterministic (resp., randomized) distributed algorithm in the $\mathsf{LOCAL}$ model that gives an $O(\alpha)$-approximation for MDS in $O(R(n)\cdot\alpha^2\log n)$ rounds. The algorithm works even when either $\alpha$ or $n$ is unknown to each processor.
\end{restatable}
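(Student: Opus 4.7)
The plan is to convert the centralized $O(\alpha)$-approximation of \Cref{thm:static} into a distributed $\mathsf{LOCAL}$ algorithm by implementing its sequential/greedy decisions in parallel via MIS. Since an MIS on an arbitrary graph can itself be executed in $\mathsf{LOCAL}$ on any subgraph/auxiliary graph built with $O(1)$-radius local views, the budget of $O(R(n)\cdot\alpha^2\log n)$ rounds can be parsed as "$O(\alpha^2\log n)$ MIS invocations, each of cost $R(n)$ rounds, plus lower-order overhead for constructing the auxiliary graphs."

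First, as a preprocessing step, I will compute a low-arboricity orientation of $G$ so that every vertex has out-degree $O(\alpha)$. This can be done deterministically in $O(\log n)$ $\mathsf{LOCAL}$ rounds via an $H$-partition (iterative peeling of low-degree vertices) of Barenboim--Elkin type. If $\alpha$ is not known, the peeling threshold can be driven by a doubling guess, costing only a constant factor. Every vertex $v$ thereby obtains a list of $O(\alpha)$ oriented neighbors, which constitute the only candidate dominators it can "look at" locally.

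Next, the algorithm runs $O(\log n)$ outer phases, each maintaining the set $U$ of currently undominated vertices and guaranteeing that $|U|$ shrinks by a constant factor per phase. A single outer phase is implemented by $O(\alpha^2)$ sub-phases, each one batching a single centralized greedy step. Concretely, in each sub-phase I construct an auxiliary conflict graph $H$ whose vertices are the candidate dominators currently "eligible" in the centralized rule (for instance, those covering at least some threshold number of still-undominated vertices, relativized to an out-degree slot), and whose edges join two candidates whose joint selection would either over-count the same undominated vertex or violate the greedy precedence. Because both endpoints of an $H$-edge lie within $G$-distance $O(1)$ via the oriented neighborhood, each vertex can determine its $H$-neighborhood in $O(1)$ $\mathsf{LOCAL}$ rounds, after which one call to the MIS subroutine on $H$ costs $R(n)$ rounds; the selected vertices are added to $D$ and $U$ is updated. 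Sweeping over the $O(\alpha)$ slots of each dominator's out-neighborhood together with $O(\alpha)$ degree/level classes gives $O(\alpha^2)$ MIS calls per outer phase, hence $O(\alpha^2\log n)$ MIS calls in total.

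The main obstacle is designing the conflict graph $H$ so that an MIS of $H$ faithfully simulates a batched step of the centralized greedy: the adjacency rule must be tight enough that the MIS-selected dominators do not over-count any undominated vertex by more than $O(\alpha)$ per $\mathsf{OPT}$-vertex (preserving the $O(\alpha)$-approximation), yet loose enough that the MIS covers a constant fraction of $U$ in each outer phase (so that $O(\log n)$ outer phases suffice). Both properties should follow from the bounded out-degree of the orientation, which caps both the number of candidate dominators per undominated vertex and the amount of mutual conflict among candidates. Finally, the robustness to unknown $n$ or $\alpha$ is handled by the standard device of running the algorithm with geometrically increasing guesses of these parameters and halting once every vertex is dominated, which multiplies the round count by at most a constant.
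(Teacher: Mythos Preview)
Your outline has the right round budget and the right high-level shape (reduce to $O(\alpha^2\log n)$ MIS calls on an auxiliary conflict graph), but as written it has a genuine gap: you never actually specify the conflict graph $H$ or the ``eligibility'' rule, and you never invoke the counter/voting mechanism that is the entire reason the centralized algorithm of \Cref{thm:static} gets $O(\alpha)$ rather than $O(\alpha^2)$. Phrases like ``edges join two candidates whose joint selection would over-count the same undominated vertex or violate the greedy precedence'' and ``sweeping over the $O(\alpha)$ slots \ldots together with $O(\alpha)$ degree/level classes'' are placeholders, not definitions; without them one cannot verify either the approximation bound or the claim that each outer phase shrinks $U$ by a constant factor. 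In particular, an orientation by itself only tells you that each undominated vertex has $O(\alpha)$ candidate dominators, which is exactly the Lenzen--Wattenhofer / Jones et al.\ setup that yields $O(\alpha^2)$ approximation; to get $O(\alpha)$ you must also control how many vertices are \emph{passively} added, and that is what the voting mechanism does.

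For comparison, the paper's argument stays much closer to the centralized algorithm and avoids orientations entirely. It keeps the same sets $D,B_{high},B_{low},W,W_{low}$ and the same counters $c_v$. The conflict graph is concrete: its vertex set is $W_{low}$, and two vertices are adjacent if they share a neighbor in $W\cup B_{high}$ (i.e.\ a distance-$2$ graph). An MIS $I$ in this graph guarantees that every vertex in $W\cup B_{high}$ has at most one neighbor in $I$, so each counter increases by at most one per iteration; hence the approximation analysis of the centralized algorithm applies verbatim. For the round bound, one shows directly that any vertex can remain in $W_{low}$ for at most $(\delta\alpha)^2$ iterations (each iteration increments some counter among its $\le\delta\alpha$ neighbors in $W\cup B_{high}$, and each such counter tops out at $\delta\alpha$); combined with the fact that at least half of $W$ lies in $W_{low}$, this gives $O(\alpha^2\log n)$ iterations. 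If you want to salvage your approach, you would need to make your $H$ this explicit and explain how the MIS interacts with the counters; the orientation preprocessing is not needed.
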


While \cref{thm:local} is the first deterministic
non-LP-based 
algorithm to achieve an $O(\alpha)$-approximation, we note that the LP-based approach obtained by combining ~\cite{kuhn2006price} and~\cite{bansal2017tight} appears to achieve fewer rounds and work in the $\mathsf{CONGEST}$ model.
\cref{thm:local} is not our main result and is used as a stepping stone towards our $O(\alpha\log n)$ round algorithm in the $\mathsf{CONGEST}$ model.

We finally note that unlike in the centralized setting, handling unknown $\alpha$ in the distributed setting it is not trivial and requires special treatment; in \cref{sec:unknown} we demonstrate that all of our distributed algorithms can cope with unknown $\alpha$ without increasing the approximation factor and running time beyond constant factors.

\paragraph*{Wider applicability}
We have demonstrated the applicability of our centralized algorithm to the distributed setting. We anticipate that the core idea behind our centralized algorithm could be applied more broadly, to other settings that involve locality. Perhaps the prime example in this context is the standard (centralized) setting of dynamic graph algorithms, where the graph undergoes a sequence of edge updates (a single edge update per step), and the algorithm should maintain the graph structure of interest ($O(\alpha)$-approximate MDS in our case) with a small {\em update time} --- preferably  $\poly \log(n)$ and ideally $O(1)$. 


\subsection{Technical overview}
\paragraph*{Centralized algorithm} As a starting point, we consider the algorithm of Jones,  Lokshtanov, Ramanujan, Saurabh, and Such{\`y}~\cite{jones2013parameterized}, which achieves an $O(\alpha^2)$-approximation in linear time. Their algorithm is as follows. They iteratively build a dominating set $D$ and maintain a partition of the remaining vertices into the dominated vertices $B$ (the vertices that have a neighbor in $D$), and the undominated vertices $W$. This partition of the vertices, as well as further partitioning described later, is shown in \cref{fig:dom}. The basic property of arboricity $\alpha$ graphs used by their algorithm is that every subgraph contains a vertex of degree $O(\alpha)$. They begin by choosing a vertex $v$ with degree $O(\alpha)$ and adding $v$  along with $v$'s entire neighborhood $N(v)$ to $D$. The intuition behind this is that at least one vertex in $\{v\}\cup N(v)$ must be in $OPT$ (an optimal dominating set), since $OPT$ must dominate $v$. Hence, they add at least one vertex in $OPT$ and use that to pay for adding $O(\alpha)$ vertices not in $OPT$. We say that a vertex $w$ \emph{witnesses} $v$ and the vertices in $N(v)$ that are added to $D$, if $w\in OPT\cap (\{v\}\cup N(v))$. Now, the goal of the algorithm is to iteratively choose vertices $v$ to add to $D$ along with $O(\alpha)$ many of $v$'s neighbors so that each vertex in $OPT$ witnesses $O(\alpha)$ vertices $v$ along with $O(\alpha)$ neighbors for each such vertex $v$. That is, each vertex in $OPT$ witnesses $O(\alpha^2)$ vertices in $D$, which yields an $O(\alpha^2)$-approximation.

To choose which vertices $v$ and which $O(\alpha)$ of $v$'s neighbors to add to $D$, they partition the set $B$ into two subsets $B_{low}$ and $B_{high}$, which are the sets of vertices in $B$ with low and high degree to $W$, respectively, where the degree threshold is $\delta\alpha$ for some constant $\delta$. We also define $W_{low}\subseteq W$ (differently from the notation of \cite{jones2013parameterized}) as the subset of vertices with degree at most $\delta\alpha$ in the subgraph induced by $W\cup B_{high}$. They add a vertex $w\in W_{low}$ to $D$ along with $w$'s $O(\alpha)$ neighbors that are in $W\cup B_{high}$. In the interest of brevity, we will not motivate why this scheme achieves the desired outcome that each vertex in $OPT$ witnesses $O(\alpha^2)$ vertices in $D$. 

The key innovation in our algorithm that allows us to reduce the approximation factor from $O(\alpha^2)$ to $O(\alpha)$ is a simple but powerful idea. After choosing a vertex $w$ to add to $D$, we do not {\em immediately} add $O(\alpha)$ of $w$'s neighbors to $D$. Instead $w$ casts a ``vote'' for these $O(\alpha)$ neighbors, and only once a vertex gets $\delta\alpha$ many votes is it added to $D$. With this modification, we can argue that each vertex in $OPT$ still witnesses $O(\alpha)$ such vertices $w$ as in the previous approach, but the catch here is that each such vertex $w$ contributes only $O(1)$ neighbors to $D$ on average, so each vertex in $OPT$ only witnesses a \emph{total} of $O(\alpha)$ vertices in $D$, rather than $O(\alpha^2)$.  Moreover, it is straightforward to implement this algorithm in linear time.

\begin{figure}[h]
    \centering
    \includegraphics{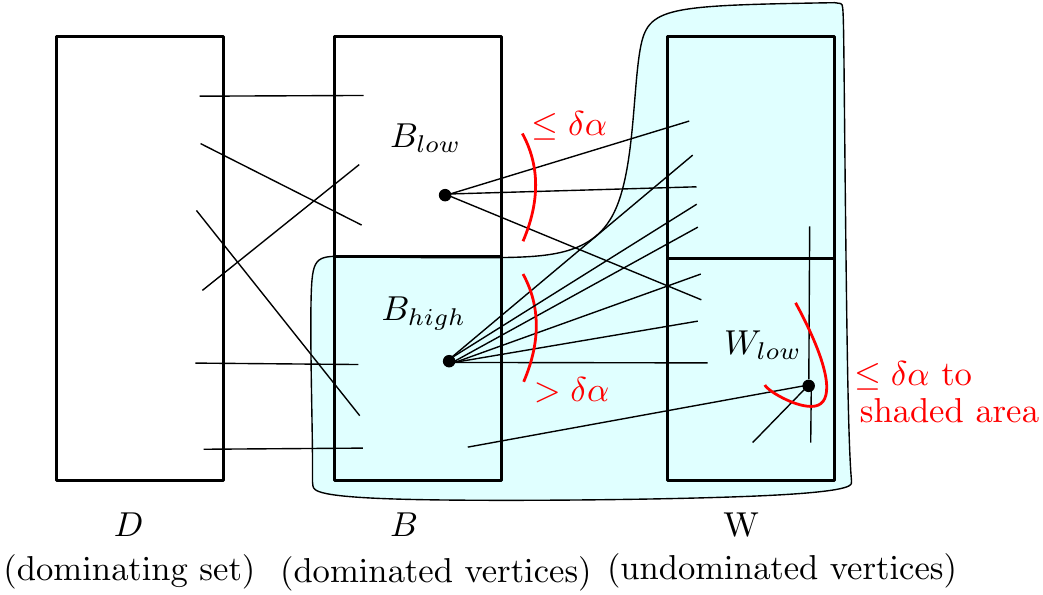}
    \caption{The partition of vertices}
    \label{fig:dom}
\end{figure}

\paragraph{Distributed algorithms using MIS} This section concerns the proof of~\cref{thm:local}: our reduction from MDS to MIS in the $\mathsf{LOCAL}$ model. This section also concerns a modification of this reduction that gives an $O(\alpha^2\log^2 n)$ round algorithm in the $\mathsf{CONGEST}$ model. We use this algorithm as a stepping stone towards obtaining our main distributed algorithm (\cref{thm:improved}) which runs in $O(\alpha\log n)$ rounds in the $\mathsf{CONGEST}$ model. 

We adapt our centralized algorithm to the distributed setting as follows. Recall that in our centralized algorithm, we repeatedly choose a vertex $w\in W_{low}$, add $w$ to $D$, and cast a vote for each vertex in $N(w)\cap (W\cup B_{high})$. For our distributed algorithms, we would like to choose \emph{many} such vertices $w$ and process them in {\em parallel}. In fact, a constant fraction of the vertices in $W\cup B_{high}$ could be chosen as our vertex $w$ since a constant fraction of vertices in a graph of arboricity $\alpha$ have degree $O(\alpha)$. However, we cannot simply process all of these vertices in parallel. In particular, if a vertex $v$ has many neighbors being processed in parallel, $v$ might accumulate many votes during a single round. This would invalidate the analysis of the algorithm, which relies on the fact that once a vertex $v$ receives $\delta\alpha$ votes, $v$ enters $D$. 

To overcome this issue, we compute an MIS with respect to a 2-hop graph built from a subgraph of ``candidate" vertices, and only process the vertices in this MIS in parallel. This MIS has two useful properties: 1. Its maximality implies that in any 2-hop neighborhood of a candidate vertex there is a vertex in the MIS; this helps to bound the number of rounds, and 2. Its independence implies that every vertex has at most one neighbor in the MIS, which ensures that any vertex can only receive one vote per round. To conclude, this approach gives a reduction from distributed MDS to distributed MIS in the $\mathsf{LOCAL}$ model. This approach can be made to work in the $\mathsf{CONGEST}$ model by replacing the black-box MIS algorithm with a 2-hop version of Luby's algorithm. This approach of running the 2-hop version of Luby's algorithm was also used in \cite{LW10} for their distributed $(\alpha^2)$-approximation for MDS.

\paragraph{Faster randomized distributed algorithm}  In the $\mathsf{CONGEST}$ model, our distributed algorithm using MIS runs in $O(\alpha^2 \log^2 n)$ rounds with high probability. We devise a new, more nuanced algorithm that decreases the number of rounds to $O(\alpha \log n)$ with high probability. Our new algorithm is based on our previous algorithm, but with two key modifications, which save factors of $\log n$ and $\alpha$, respectively.

Our first key modification, which shaves a $\log n$ factor from the number of rounds, is that we do not run an MIS algorithm as a black box. Instead, we run only a single phase of a Luby-like MIS algorithm before updating the data structures. Intuitively, this saves a $\log n$ factor because we are running just one phase of a $O(\log n)$-phase algorithm, but it is not clear a priori whether we achieve the same progress as Luby's algorithm in a single phase. We demonstrate that this is indeed the case via more refined treatment of the behavior of each edge.

Our second key modification, which shaves an $\alpha$ factor from the number of rounds, concerns the Luby-like algorithm. Recall that in Luby's algorithm, each vertex $v$ picks a random value $p(v)$ and then joins the MIS if $p(v)$ is the local minimum. In our algorithm, a vertex $v$ instead joins the dominating set if $p(v)$ is an $\alpha$-minimum, which roughly means that $p(v)$ is among the $\alpha$ smallest values that it is compared to. We show that with this relaxed definition, we still have the desired property that no vertex receives more than $\delta\alpha$ votes in a single round.

The main technical challenge is the analysis of the number of rounds. It is tempting to use an analysis similar to that of Luby's algorithm, where we count the expected number of ``removed edges'' over time. However, our above modifications introduce several complications that preclude such an analysis. Instead, we use a carefully chosen function to measure our progress. Throughout the algorithm, we add ``weight'' to particular edges, and our function measures the ``total available weight''. Specifically, whenever a vertex $v$ is added to the dominating set, $v$ adds \emph{weight} to a particular set of edges in its 2-hop neighborhood. We show that the total amount of weight added in a single iteration of the algorithm decreases the total available weight substantially, which allows us to bound the total number of iterations. 

All of our distributed algorithms so far have assumed that $\alpha$ is known to each processor but that $n$ is unknown. We additionally show that all of them can be made to work in the setting where $\alpha$ is unknown but $n$ is known. The idea of this modification is to guess $\log n$ values of $\alpha$ and run a truncated version of the algorithm for each guess. However, it is impossible for an individual processor to know which guess of $\alpha$ is the most accurate without knowing the whole graph, so the processors cannot coordinate their guesses globally. We end up with different processors using different guesses of $\alpha$, but we show that we can nonetheless obtain an algorithm whose approximation factor and running time are in accordance with the correct $\alpha$.

\subsection{Organization}

\cref{sec:pre} is for preliminaries. In \cref{sec:central}, we present our centralized algorithm (\cref{thm:static}). In \cref{sec:dist}, we present our distributed algorithms using MIS: in the $\mathsf{LOCAL}$ model we prove \cref{thm:local}, and in the $\mathsf{CONGEST}$ model we give a randomized algorithm with $O(\alpha^2\log^2 n)$ rounds that serves as a warm-up for the faster algorithm of \cref{thm:improved}. In \cref{sec:imp}, we prove \cref{thm:improved} and show that all of our distributed algorithms can be made to handle unknown $\alpha$.

\section{Preliminaries}\label{sec:pre}
Let $G=(V,E)$ be an unweighted undirected graph. For any $S\subseteq V$, let $G[S]$ be denote the subgraph induced by $S$. For any $v\in V$, $N_G(v)$ denotes the neighborhood of $v$, and $\deg_G(v)=|N_G(v)|$ denotes the degree of $v$. When the graph $G$ is clear from context, we omit the subscript.\\

We define the $\mathsf{LOCAL}$ and $\mathsf{CONGEST}$ models (cf.~\cite{linial1987distributive, linial1992locality,peleg2000distributed}):

\begin{definition}
The $\mathsf{LOCAL}$ model: given a graph $G$ on $n$ vertices, every vertex is a separate processor running one process. Every vertex starts knowing only $n$ and it's own unique identifier. The algorithm works in synchronous rounds, and in every round each vertex performs some computation based on its own current information, then it sends a message to its neighbors, and finally it receives the messages sent to it by its neighbors in that round. 
\end{definition}
\begin{definition}
The $\mathsf{CONGEST}$ model: given a graph $G$ on $n$ vertices, every vertex is a separate processor running one process. Every vertex starts knowing only $n$ and it's own unique identifier. The algorithm works in synchronous rounds, and in every round each vertex performs some computation based on its own current information, then it sends a message to its neighbors  of at most $B = O(\log n)$ bits on each of its edges (possibly a different message to each neighbor), and finally it receives the messages sent to it by its neighbors in that round. 
\end{definition}

For the problem of MDS in both models, the requirement is that at the end of the computation, every vertex knows whether or not it belongs to the dominating set.

The following two claims about graphs of bounded arboricity will be useful. Simple proofs of both can be found in \cite{arikati1997efficient}.

\begin{claim}\label{claim:induced} In a graph of arboricity $\alpha$, every subgraph contains a vertex of degree at most  $2\alpha$.
\end{claim}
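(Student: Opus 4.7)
The plan is to prove the claim by an averaging argument applied to the induced subgraph on the vertex set of the subgraph in question, then transfer the conclusion back to the subgraph itself. This reduces the claim about arbitrary subgraphs to the arboricity bound, which a priori only controls induced subgraphs.

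First, given an arbitrary subgraph $H$ of $G$, I would let $S = V(H) \subseteq V$ and consider the induced subgraph $G[S]$. By the definition of arboricity, $m_s \le \alpha(n_s - 1)$ where $m_s = |E(G[S])|$ and $n_s = |S|$. The handshake identity then gives that the sum of degrees in $G[S]$ equals $2m_s \le 2\alpha(n_s - 1) < 2\alpha n_s$, so the average degree in $G[S]$ is strictly less than $2\alpha$. Hence there exists a vertex $v \in S$ whose degree in $G[S]$ is strictly less than $2\alpha$, and in particular at most $2\alpha$.

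Second, since $H$ is a subgraph of $G[S]$ on the same vertex set, the degree of $v$ in $H$ is at most its degree in $G[S]$, which is at most $2\alpha$. This vertex $v$ lies in $V(H) = S$, so it is a vertex of $H$ with degree (in $H$) bounded by $2\alpha$, completing the proof.

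I do not anticipate a real obstacle here; the only subtlety worth flagging is the distinction between ``subgraph'' and ``induced subgraph'' in the statement, which is handled cleanly by the monotonicity of degree under edge deletion in the last step. Everything else reduces to a one-line averaging argument against the arboricity inequality.
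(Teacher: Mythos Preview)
Your proof is correct. The paper itself does not prove this claim in-line; it defers to \cite{arikati1997efficient} with the remark that ``simple proofs of both can be found'' there. Your averaging argument---bounding the edge count of $G[S]$ by $\alpha(n_s-1)$, dividing by $n_s$ to force a vertex of degree strictly below $2\alpha$, and then noting that passing to the (possibly non-induced) subgraph $H$ only decreases degrees---is exactly the standard elementary proof one would expect the cited reference to contain. The one edge case you may want to mention explicitly is $|S|=1$, where the arboricity inequality as stated degenerates (division by $n_s-1=0$), but there the unique vertex trivially has degree $0$.
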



\begin{claim}\label{claim:induced2}
In a graph G with arboricity $\alpha$, at least half of the vertices in any subgraph have degree at most $4\alpha$.
\end{claim}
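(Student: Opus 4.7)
The plan is to give a direct averaging (Markov-style) argument, analogous to the proof of \cref{claim:induced}. Let $S \subseteq V$ be an arbitrary subset and consider the induced subgraph $G[S]$, with $n_S = |S|$ vertices and $m_S$ edges. The key point is that the definition of arboricity is quantified over \emph{all} subsets of $V$, so it applies directly to $S$ itself: $m_S \leq \alpha(n_S - 1) < \alpha n_S$. By the handshake lemma, the sum of degrees in $G[S]$ satisfies
\[
\sum_{v \in S} \deg_{G[S]}(v) \;=\; 2 m_S \;<\; 2\alpha \, n_S.
\]

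Next, suppose for contradiction that strictly fewer than $n_S/2$ vertices of $G[S]$ have degree at most $4\alpha$ in $G[S]$. Then strictly more than $n_S/2$ vertices have degree greater than $4\alpha$, so
\[
\sum_{v \in S} \deg_{G[S]}(v) \;>\; \frac{n_S}{2} \cdot 4\alpha \;=\; 2\alpha \, n_S,
\]
which contradicts the previous display. Hence at least half of the vertices in $G[S]$ have degree at most $4\alpha$, as claimed.

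There is essentially no obstacle here; the only subtlety worth highlighting is that one must apply the arboricity bound to the induced subgraph $G[S]$ rather than to $G$ as a whole, which is precisely what the definition of arboricity (quantifying over every $S \subseteq V$) permits.
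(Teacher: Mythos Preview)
Your proof is correct; the averaging/Markov argument you give is the standard one. The paper itself does not supply a proof of this claim but simply cites \cite{arikati1997efficient}, so there is nothing further to compare.
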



\section{Linear time $O(\alpha)$-approximation for MDS}\label{sec:central}

In this section we will prove \cref{thm:static}, which we recall:

\thmstatic*

\subsection{Algorithm}
A description of our algorithm is as follows. See \cref{alg:static} for the pseudocode. 

We first introduce some notation. Since our algorithm builds off of \cite{jones2013parameterized}, we stick to their notation for the most part. See \cref{fig:dom}. We define a constant $\delta$ and let $\delta\alpha$ be our \emph{degree threshold}. We will set $\delta=2$, but we use the variable $\delta$ so that our analysis also applies to our distributed algorithms, where $\delta$ is a different constant. We maintain a partition of the vertices into three sets: $D$, $B$, and $W$, where initially $D=\emptyset$, $B=\emptyset$, and $W=V$. The set $D$ is our current dominating set, the set $B$ is the vertices not in $D$ with at least one neighbor in $D$, and the set $W$ is the remaining vertices, i.e. the undominated vertices. The set $B$ is further partitioned into two sets based on the degree of each vertex to $W$. Let $B_{low}=\{v\in B : |N(v)\cap W| \leq \delta\alpha\}$ and let $B_{high} = B\setminus B_{low}$. Let $W_{low}=\{v\in W: |N(v)\cap (W\cup B_{high})|\leq \delta\alpha\}$ Also, each vertex $v$ has a \emph{counter} $c_v$ initialized to 0. (The counter $c_v$ counts the  number of ``votes'' that $v$ receives, for the notion of ``votes'' introduced in the technical overview.)

First we claim that while $W$ is nonempty, $W_{low}$ is also nonempty. By \cref{claim:induced}, $G[W\cup B_{high}]$ contains a vertex $v$ of degree at most $2\alpha$. Since $\delta=2$, $v$ cannot be in $B_{high}$ by the definition of $B_{high}$, so $v$ must be in $W$, and hence in $W_{low}$.

The algorithm proceeds as follows. While there still exists an undominated vertex (i.e. while $W\not=\emptyset$), we do the following. 
First, we pick an arbitrary vertex $w\in W_{low}$ (we have shown that $W_{low}$ is nonempty). 
Then, for all $v\in N(w)\cap (W\cup B_{high})$, we increment $c_v$, and if $c_v=\delta\alpha$, then we add $v$ to $D$. Then, we add $w$ to $D$. Lastly, we update the sets $B$, $_{low}$, $B_{high}$, $W$, and $W_{low}$ according to their definitions. This concludes the description of the algorithm.

\begin{algorithm}
\caption{Linear time $O(\alpha)$-approximation for MDS}\label{alg:static}

\begin{algorithmic}[1]
\State Initialize partition: $D\gets \emptyset$, $B=\emptyset$, $B_{high}\gets\emptyset$, $B_{low}\gets\emptyset$, $W\gets V$, $W_{low}=\{v\in V: \deg(v)\leq \delta\alpha\}$
\State Initialize counters: $\forall v\in V: c_v \leftarrow 0$
\While {$W \neq \phi $}
\State $w\gets$ a vertex in $W_{low}$ \label{line:w}
\ForAll {$v \in N(w)\cap(W\cup B_{high})$} \label{line:foreach}
\State $c_v \leftarrow c_v + 1$\label{line:counter}
\If{$c_v=\delta\alpha$} 
\State $D \leftarrow D\cup{v}$ \label{line:passive}
\EndIf
\EndFor
\State $D \leftarrow D\cup{w}$\label{line:active}
\LineComment{{\bf Bookkeeping to update partition:}}
\State $B=\{v: N(v)\cap D\not=\emptyset\}$\label{line:b}
\State $B_{low}=\{v\in B: |N(v)\cap W| \leq \delta\alpha\}  $
\State $B_{high} = B\setminus B_{low}$
\State $W=V\setminus (D\cup B)$\label{line:ww}
\State{$W_{low}=\{v\in W: |N(v)\cap (W\cup B_{high})|\leq \delta\alpha\}$}
\EndWhile
\State Return D

\end{algorithmic}
\end{algorithm}

\subsection{Analysis}
First, we note that $D$ is indeed a dominating set because the algorithm only terminates once the set $W$ of vertices that are not dominated, is empty.
\subsubsection{Approximation ratio analysis}\label{sec:approx}
Let $OPT$ be an optimal MDS. We will prove that the set $D$ returned by \cref{alg:static} is of size at most $4\delta\alpha\cdot|OPT|$. 

We first make the following simple claim about the behavior of the partition of vertices over time.
\begin{restatable}{claim}{claimenter}
\label{claim:enter}$ $
\begin{enumerate}
\item No vertex can ever leave $D$.
\item No vertex can ever enter $W$ from another set.
\item No vertex can ever leave $B_{low}$. 
\end{enumerate}
\end{restatable}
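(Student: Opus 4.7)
The three parts are essentially monotonicity statements about how the partition evolves, and I would prove them in the order stated, using each earlier part when convenient.

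For part 1, the plan is simply to inspect \cref{alg:static}: the only lines that modify $D$ are the assignments $D \leftarrow D \cup \{v\}$ (line 8) and $D \leftarrow D \cup \{w\}$ (line 11), both of which can only add elements. The bookkeeping block defines $B$, $W$, $B_{low}$, $B_{high}$, $W_{low}$ as subsets of $V \setminus D$ (or via conditions that do not touch $D$), and so never removes elements from $D$. Hence $D$ is monotonically increasing across iterations.

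For part 2, I would use the characterization $W = V \setminus (D \cup B)$, which by the definition of $B$ means $v \in W$ iff $v \notin D$ and $N(v) \cap D = \emptyset$. Suppose for contradiction that some vertex $v$ enters $W$ between iteration $t$ and iteration $t+1$. At iteration $t$ either $v \in D$ or $v \in B$. The first case is impossible by part 1. In the second case, $v$ had a neighbor in $D$ at iteration $t$; since $D$ only grows, that neighbor is still in $D$ at iteration $t+1$, so $N(v) \cap D \neq \emptyset$ and $v \notin W$ at iteration $t+1$, a contradiction. A useful corollary that I would state explicitly is that $W$ is monotonically shrinking, and in particular $|N(v) \cap W|$ is non-increasing for every vertex $v$.

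For part 3, I would show that a vertex $v \in B_{low}$ cannot move to any of $D$, $B_{high}$, or $W$. Moving to $W$ is ruled out by part 2. Moving to $B_{high}$ would require $|N(v) \cap W|$ to increase beyond $\delta\alpha$, which is impossible by the corollary to part 2. Moving to $D$ is the only subtle case: the vertex chosen in line 4 is picked from $W_{low} \subseteq W$, so $v \in B_{low}$ is never chosen as $w$; and the counter-based rule in lines 6--9 only increments $c_u$ for $u \in N(w) \cap (W \cup B_{high})$, so a vertex in $B_{low}$ never has its counter incremented and hence never reaches the threshold $\delta\alpha$ while it remains in $B_{low}$. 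Thus $v$ stays in $B_{low}$ forever.

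None of these steps should be hard; the only place requiring a tiny bit of care is part 3, where one must observe both that votes are never cast on $B_{low}$ vertices and that $W$ only shrinks so that $B_{low}$ membership is stable against the degree threshold. I would therefore prove part 2 first (including the corollary that $W$ is monotone), and then invoke it in part 3 without further comment.
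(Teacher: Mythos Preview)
Your proposal is correct and follows essentially the same approach as the paper: item 1 by inspection that $D$ only grows, item 2 from item 1 plus the definition $W=\{v: N(v)\cap D=\emptyset\}$, and item 3 by a three-way case analysis (cannot go to $W$ by item 2, cannot go to $B_{high}$ since $|N(v)\cap W|$ is non-increasing, cannot go to $D$ since both entry points to $D$ come only from $W\cup B_{high}$). Your phrasing of the last case in terms of counters never being incremented on $B_{low}$ vertices is just a restatement of the paper's observation that line~8 only fires for vertices in $N(w)\cap(W\cup B_{high})$.
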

\begin{proof}
Item 1 is by definition. Item 2 follows from item 1 combined with the fact that $W$ is defined as the set of vertices with no neighbors in $D$. Now we prove item 3. A vertex from $B_{low}$ cannot enter $W$ by item 2. A vertex from $B_{low}$ cannot enter $B_{high}$ since the degree partition of $B$ is based on degree to $W$, and by item 2 the degree of any vertex to $W$ can only decrease over time. A vertex from $B_{low}$ cannot enter $D$ because there are two ways a vertex can enter $D$: on \cref{line:passive} a vertex can only enter $D$ from $W\cup B_{high}$, and on \cref{line:active} a vertex can only enter $D$ from $W$. 
\end{proof}

In order to show that $|D|\leq 4\delta\alpha\cdot|OPT|$, we partition $D$ into two sets, $D_{active}$ and $D_{passive}$, and bound each of these sets separately. The set $D_{active}$ consists of the vertices added to $D$ due to being chosen as the vertex $w$; that is, the vertices added to $D$ in \cref{line:active} of \cref{alg:static}. The set $D_{passive}$ consists of the vertices added to $D$ as a result of their counters reaching $\delta\alpha$; that is, the vertices added to $D$ in \cref{line:passive} of \cref{alg:static}. We will first bound $|D_{active}|$. 

\begin{claim}\label{claim:active} $|D_{active}|\leq 2\delta\alpha\cdot|OPT|$.
\end{claim}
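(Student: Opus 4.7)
The plan is to charge each $w \in D_{active}$ to a ``witness'' $u_w \in OPT \cap (\{w\} \cup N(w))$ and to show that no $u \in OPT$ is charged more than $2\delta\alpha$ times. Such a witness always exists: when $w$ is picked in \cref{line:w} it lies in $W_{low}\subseteq W$, hence has no neighbor in $D$, and since $OPT$ dominates $w$ some vertex of $OPT \cap (\{w\}\cup N(w))$ must be available. Fixing any such assignment and letting $T(u) = |\{w\in D_{active} : u_w = u\}|$, we have $|D_{active}| = \sum_{u\in OPT} T(u)$, so it suffices to prove $T(u)\le 2\delta\alpha$ for every $u\in OPT$.

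Fix $u\in OPT$. A useful preliminary observation is that $u \notin D$ at the moment any $w$ with $u_w = u$ is chosen: otherwise $u \in N(w)\cap D$ would force $w\in B$, contradicting $w\in W$. Hence at each such selection time $u$ lies in exactly one of $W$, $B_{high}$, or $B_{low}$, and the special situation $u = w$ can also occur. I would partition the $w$'s with $u_w = u$ according to the state of $u$ and bound each part.

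For $w$'s chosen with $u\in W\cup B_{high}$ and $u\ne w$, the loop in \cref{line:foreach} increments $c_u$; once $c_u$ reaches $\delta\alpha$, $u$ is added to $D$, and by the preliminary observation no further $w$ with $u_w=u$ can be chosen. So this subcase contributes at most $\delta\alpha$. The case $u = w$ occurs at most once, and if it does, let $t^*$ be its selection time: throughout all times prior to $t^*$ the vertex $u$ has been in $W$, so every earlier $w'$ with $u_{w'}=u$ falls into the previous subcase and increments $c_u$, which is still $<\delta\alpha$ just before $t^*$; after $t^*$ the vertex $u$ sits in $D$ and blocks any further contributions. Combining, these two situations together contribute at most $\delta\alpha$.

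The main obstacle is the subcase $u\in B_{low}$ at the selection time of $w$, since then $c_u$ is not incremented and the counter argument fails. I would resolve it using \cref{claim:enter}: by item~2, $W$ only loses vertices, so $N(u)\cap W$ is monotonically non-increasing over time; and by the definition of $B_{low}$, the snapshot $S_0 := N(u)\cap W$ at the moment $u$ first entered $B_{low}$ satisfies $|S_0|\le \delta\alpha$. Every $w$ with $u_w = u$ selected while $u\in B_{low}$ is a distinct neighbor of $u$ that lies in $N(u)\cap W \subseteq S_0$ at its selection time, so at most $\delta\alpha$ such $w$'s can arise. Combining with the previous bound yields $T(u)\le 2\delta\alpha$, which completes the proof.
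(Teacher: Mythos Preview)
Your proof is correct and follows essentially the same approach as the paper: assign each $w\in D_{active}$ a witness in $OPT$, partition the charges to a fixed $u\in OPT$ according to whether $u$ lies in $W\cup B_{high}$ or in $B_{low}$ at the selection time, and bound the two parts by $\delta\alpha$ via the counter argument and via the monotone shrinking of $N(u)\cap W$, respectively. You are actually slightly more careful than the paper in allowing $u=w$ as a self-witness and explicitly folding that single occurrence into the $\delta\alpha$ counter bound.
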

\begin{proof}

For each vertex $v \in D_{active}$, we assign $v$ to an arbitrary vertex $u\in N(v)\cap OPT$, and we say that $u$ \emph{witnesses} $v$. Such a vertex $u$ exists since $OPT$ is a dominating set. For each vertex $u\in OPT$, let $D_u\subseteq D_{active}$ be the set of vertices that $u$ witnesses. Our goal is to show that for each $u\in OPT$, $|D_u|\leq 2\delta\alpha$.

Fix a vertex $u\in OPT$. We partition the vertices $v\in D_u$ into two sets $D_u[B_{low}]$ and $D_u[B_{high}\cup W]$. Let $D_u[B_{low}]\subseteq D_u$ be the vertices that enter $D$ while $u$ is in $B_{low}$. Let $D_u[B_{high}\cup W]\subseteq D_u$ be vertices that enter $D$ while $u$ is in $B_{high}\cup W$. We note that no vertex in $D_u$ can enter $D$ while $u$ is in $D$, because by definition, every vertex in $D_{active}\supseteq D_u$ moves directly from $W$ to $D$. Therefore, $D_u=D_u[B_{low}]\cup D_u[B_{high}\cup W]$.

We first bound $\big|D_u[B_{low}]\big|$. By definition, while $u$ is in $B_{low}$, $u$ has at most $\delta\alpha$ neighbors in $W$. Since no vertex can ever enter $W$ by \cref{claim:enter}, no vertex can ever enter $N(u)\cap W$. Therefore, starting from the time that $u$ first enters $B_{low}$, the total number of vertices ever in $N(u)\cap W$ is at most $\delta\alpha$. Every vertex $v\in D_u[B_{low}]$ is in $N(u)\cap W$ right before moving to $D$, so $\big|D_u[B_{low}]\big|\leq \delta\alpha$.

Now, we bound $D_u[B_{high}\cup W]$. By the specification of the algorithm, whenever a vertex $v\in D_u[B_{high}\cup W]$ enters $D$, the counter $c_u$ is incremented. Once $c_u$ reaches $\delta\alpha$, $u$ is added to $D$. Therefore, $\big|D_u[B_{high}\cup W]\big|\leq \delta\alpha$.

Putting everything together, we have $|D_u|=\big|D_u[B_{low}]\big|+\big|D_u[B_{high}\cup W]\big|\leq 2\delta\alpha$.
\end{proof}

Now we bound $D_{passive}$.
\begin{claim}\label{claim:passive}
$|D_{passive}|\leq|D_{active}|$.

\end{claim}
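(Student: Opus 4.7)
The plan is to use a simple double-counting argument on the total number of counter increments across the execution of the algorithm. Let $T$ denote the total number of increments to counters $c_v$ performed in \cref{line:counter} over the entire run of the algorithm. I will upper-bound $T$ in terms of $|D_{active}|$ and lower-bound $T$ in terms of $|D_{passive}|$, and the claim will follow by combining these.

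For the upper bound on $T$, note that a counter is incremented exactly once per pair $(w,v)$ such that $w$ is the vertex chosen in \cref{line:w} at some iteration and $v \in N(w) \cap (W \cup B_{high})$ at that moment. The vertex $w$ is then added to $D_{active}$ in \cref{line:active}. Since $w \in W_{low}$ at the time of selection, the definition of $W_{low}$ gives $|N(w)\cap (W\cup B_{high})| \leq \delta\alpha$. Summing over all iterations, $T \leq \delta\alpha \cdot |D_{active}|$.

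For the lower bound on $T$, observe that every vertex $v \in D_{passive}$ is added to $D$ precisely when its counter reaches $\delta\alpha$ (\cref{line:passive}), which requires $\delta\alpha$ increments dedicated to $v$. Moreover, once $v$ enters $D$, it leaves $W\cup B_{high}$ (since $D$, $B_{low}$, $B_{high}$, $W$ form a partition, and by \cref{claim:enter} $v$ cannot leave $D$), and therefore its counter can no longer be incremented. Hence the $\delta\alpha$ increments assigned to distinct vertices of $D_{passive}$ are disjoint, yielding $T \geq \delta\alpha \cdot |D_{passive}|$.

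Combining the two bounds gives $\delta\alpha \cdot |D_{passive}| \leq T \leq \delta\alpha \cdot |D_{active}|$, i.e., $|D_{passive}| \leq |D_{active}|$, as required. I do not anticipate any technical obstacle here; the only point requiring a moment of care is verifying that a passive vertex's counter cannot continue to accumulate extra increments after it is added to $D$ (so that the $\delta\alpha$ lower bound per passive vertex is clean), which follows directly from the partition structure and \cref{claim:enter}.
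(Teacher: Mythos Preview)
Your proof is correct. Both your argument and the paper's are double-counting arguments with the same $\delta\alpha$ threshold on each side, but you count a slightly different quantity: the paper counts edges in the bipartite graph between $D_{active}$ and $D_{passive}$ (showing each passive vertex has at least $\delta\alpha$ active neighbors, and each active vertex has at most $\delta\alpha$ passive neighbors), whereas you count the total number of counter increments over the whole run. Your formulation is a touch more direct on the upper-bound side: ``each $w\in D_{active}$ causes at most $\delta\alpha$ increments'' is immediate from $w\in W_{low}$, while the paper, to bound the number of \emph{passive} neighbors of $w$, must additionally argue (via item~3 of \cref{claim:enter}) that none of $w$'s neighbors in $B_{low}$ can ever enter $D$. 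One minor remark: your sentence about $v$ leaving $W\cup B_{high}$ after entering $D$ is not actually needed for the lower bound, since the increments to distinct counters $c_v$ are disjoint by definition; the bound $T\ge \delta\alpha\cdot|D_{passive}|$ follows simply because each $v\in D_{passive}$ had its counter reach $\delta\alpha$.
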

\begin{proof}
We will show that every vertex in $D_{passive}$ has at least $\delta\alpha$ neighbors in $D_{active}$, while every vertex in $D_{active}$ has at most $\delta\alpha$ neighbors in $D_{passive}$. Then, by the pigeonhole principle, it follows that $|D_{passive}|\leq|D_{active}|$.

First, we will show that every vertex in $D_{passive}$ has at least $\delta\alpha$ neighbors in $D_{active}$. By definition, every vertex $v\in D_{passive}$ has had its counter $c_v$ incremented $\delta\alpha$ times. Every time $c_v$ is incremented, one of $v$'s neighbors (the vertex $w$ from \cref{alg:static}) is added to $D$, joining $D_{active}$. Each such neighbor of $v$ that joins $D_{active}$ is distinct since every vertex can be added to $D$ at most once by \cref{claim:enter}. Therefore, every vertex in $D_{passive}$ has at least $\delta\alpha$ neighbors in $D_{active}$. 

Now we will show that every vertex in $D_{active}$ has at most $\delta\alpha$ neighbors in $D_{passive}$. Fix a vertex $w\in D_{active}$. By definition, when $w$ enters $D$, $w$ is moved straight from $W$ to $D$. Thus, by \cref{claim:enter}, $w$ is never in $B$. Therefore, $w$ is added to $D$ before any of its neighbors are added to $D$, as otherwise $w$ would enter $B$. Therefore, when $w$ enters $D$, all of $w$'s neighbors that will enter $D_{passive}$ are in $B\cup W$. By \cref{claim:enter}, no vertex in $B_{low}$ can ever enter $D$, so actually, when $w$ enters $D$ all of $w$'s neighbors that will enter $D_{passive}$ are in $B_{high}\cup W$. By definition, when $w$ enters $D$, $w$ has at most $\delta\alpha$ neighbors in $B_{high}\cup W$. 
Therefore, $w$ has at most $\delta\alpha$ neighbors in $D_{passive}$.
\end{proof}

Combining \cref{claim:active} and \cref{claim:passive}, we have that $|D|=|D_{active}|+|D_{passive}|\leq 4\delta\alpha\cdot |OPT|$.

\subsubsection{Running time analysis}
Our goal is to prove that \cref{alg:static} runs in $O(m)$ time.

Throughout the execution of the algorithm, we maintain a data structure that consists of the following:
\begin{itemize}

    \item The partition of $V$ into $D$, $B$, $W$; with subsets $B_{low}$, $B_{high}$, $W_{low}$
    \item The induced graph $G[W\cup B_{high}]$ represented as an adjacency list
    \item For each vertex $v\in W\cup B_{high}$, the quantities $|N(v)\cap W|$ and $|N(v)\cap (W\cup B_{high})|$
\end{itemize}


First, we show that the data structure can be initialized in $O(m)$ time. Initially $D\cup B_{high}\cup B_{low}=\emptyset$, $W=V$, and the induced graph $G[W\cup B_{high}]=G$. For every vertex $v\in V$, initially $|N(v)\cap W|=|N(v)\cap (W\cup B_{high})|=\deg(v)$. Initially $W_{low}=\{v\in V : \deg(v)\leq \delta\alpha\}$.

Now, we show that the data structure can be maintained in $O(m)$ time. In particular, we will show that to maintain this data structure, it suffices to scan the neighborhood of a vertex every time it either leaves $W\cup B_{high}$ (and enters $B_{low}\cup D$), enters $D$, or leaves $W$. Note that by \cref{claim:enter}, each of these events only happens once per vertex. As a consequence, the total amount of time spent scanning neighborhoods is $O(m)$.

We assume inductively that we have maintained the data structure so far, and we consider the next iteration of the {\bf for each} loop. 
First, we consider maintenance of the partition of $V$ into $D$, $B_{low}$, $B_{high}$, and $W$. During an iteration, the only changes made to the partition are the addition of at least one vertex to $D$ (on \cref{line:passive} and/or \cref{line:active}), and the resulting update of the rest of the partition. To maintain the partition we do the following. When we add a vertex $v$ to $D$, we remove $v$ from whichever set it was previously in. Then, we update $B$ by scanning $N(v)$ and adding every vertex $u\in N(v)\setminus D$ to $B$, removing $u$ from whichever set it was previously in. Updating $D$ and $B$ automatically updates $W$ since $W=V\setminus (D\cup B)$. Before updating $B_{low}$ and $B_{high}$, we first need to update $|N(v)\cap W|$. To do this, whenever a vertex $v$ leaves $W$, we scan $N(v)$ and for each $u\in N(v)$, we decrement $|N(u)\cap W|$. Whenever we decrement $|N(u)\cap W|$ down to $\delta\alpha$ for a vertex $u\in B_{high}$, we move $u$ to $B_{low}$. This concludes the maintenance of the partition of $V$ into $D$, $B_{low}$, $B_{high}$, and $W$.

It remains to update $G[W\cup B_{high}]$, $|N(v)\cap(W\cup B_{high})|$, and $W_{low}$.
Whenever we remove a vertex $v$ from $W\cup B_{high}$, we scan $N(v)$ and for each vertex $u\in N(v)$, we remove the edge $(u,v)$ from $G[W\cup B_{high}]$ and decrement $|N(u)\cap (W\cup B_{high})|$. Whenever we decrement $|N(u)\cap (W\cup B_{high})|$ down to $\delta\alpha$ for $u\in W$, we add $u$ to $W_{low}$. This concludes the running time analysis for maintaining the data structure.

Now we will show that maintaining the data structure allows the algorithm to run in time $O(m)$. First, each iteration of the {\bf while} loop adds at least one vertex to $D$ (on \cref{line:active}), and by \cref{claim:enter}, each vertex is added to $D$ at most once, so the total number of iterations of the {\bf while} loop is at most $n$. Now we will go line by line through the body of the {\bf while} loop. On \cref{line:w}, we let $w$ be a vertex in $W_{low}$. On \cref{line:foreach}, we loop through every vertex in $|N(w)\cap (W\cup B_{high})|$. The number of iterations of this loop is at most $\delta\alpha$ by choice of $w$. Furthermore, identifying all of the vertices to loop through takes time $O(\alpha)$ since our data structure explicitly maintains $G[W\cup B_{high}]$. In \cref{line:counter} through \cref{line:active}, we update counters and then add vertices to $D$, which takes constant time per iteration of the loop. In \cref{line:b} through \cref{line:ww} we update $B$, $B_{low}$, $B_{high}$, and $W$, which takes constant time since we store these sets in our data structure. Thus, given access to the data structure, the algorithm runs in time $O(n\alpha)=O(m)$. 

Previously we showed that maintaining the data structure takes time $O(m)$, so we have that the entire algorithm takes time $O(m)$.








\section{Distributed $O(\alpha)$-approximation for MDS using MIS}\label{sec:dist}

In this section we will prove \cref{thm:local}, which we recall:

\thmlocal*

In this section we also show how to modify of the proof of \cref{thm:local} to get a bound in the $\mathsf{CONGEST}$ model:

\begin{restatable}{theorem}{thmcongest}
\label{thm:congest}
For graphs of arboricity $\alpha$, there is a randomized distributed algorithm in the $\mathsf{CONGEST}$ model that gives an $O(\alpha)$-approximation for MDS that runs in $O(\alpha^2\log^2 n)$ rounds with high probability. The algorithm works even when either $\alpha$ or $n$ is unknown to each processor.
\end{restatable}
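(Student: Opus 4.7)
The plan is to mirror the proof of \cref{thm:local}: we run the same outer algorithm, maintaining the partition $(D, B_{low}, B_{high}, W)$ and the counters $c_v$, and in each outer phase we (i) recompute $W_{low}$, (ii) invoke an MIS on the $2$-hop graph induced by $W_{low}$, (iii) let each selected vertex $w$ cast a vote to every neighbor in $W \cup B_{high}$, incrementing counters and moving vertices into $D$ when their counter reaches $\delta\alpha$, and (iv) update the partition. The only obstacle in porting this to $\mathsf{CONGEST}$ is step (ii), since a black-box $\mathsf{LOCAL}$ MIS algorithm may use messages of unbounded size; the remaining steps already send only $O(\log n)$-bit information (membership tags, counter updates, and single-bit votes) and so fit natively in $\mathsf{CONGEST}$.

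To resolve step (ii) inside $\mathsf{CONGEST}$, I would replace the black-box MIS subroutine with a $2$-hop implementation of Luby's algorithm, as used in~\cite{LW10}. Each candidate vertex $v$ draws a uniform priority $p(v)$ of $\Theta(\log n)$ bits. In one round every $v \in W_{low}$ sends $(p(v),\mathrm{id}(v))$ along its incident edges; in the next round each recipient $u$ forwards only the pair with smallest priority among those it received from $W_{low}$ neighbors (together with its own priority if $u\in W_{low}$). After these two rounds every $v\in W_{low}$ knows the minimum priority in its candidate $2$-hop neighborhood and can decide whether it is the unique minimizer, i.e., joins the tentative set. A symmetric two-round exchange then propagates the resulting removal decisions back through $2$ hops. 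Thus one phase of $2$-hop Luby takes $O(1)$ $\mathsf{CONGEST}$ rounds with $O(\log n)$-bit messages, and the entire $2$-hop MIS succeeds within $O(\log n)$ rounds with high probability by the standard Luby analysis applied to the $2$-hop graph.

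For the bookkeeping, each vertex stores only its current set label and counter, both $O(\log n)$ bits. Whenever $v$ changes its partition label (which happens $O(1)$ times over the whole execution by \cref{claim:enter}), $v$ broadcasts the new label to its neighbors in one round, which lets neighbors keep their local counts $|N(\cdot)\cap W|$ and $|N(\cdot)\cap(W\cup B_{high})|$ accurate and hence correctly recompute their own $B_{low}/W_{low}$ membership. Casting a vote is a single bit sent from $w$ to each neighbor, and incrementing $c_v$ is purely local. The correctness and the $O(\alpha)$-approximation follow verbatim from the proof of \cref{thm:local}: the independence of the $2$-hop MIS guarantees that no vertex in $W\cup B_{high}$ receives more than one vote per phase, so the witnessing argument of \cref{claim:active,claim:passive} carries over unchanged, and maximality in the $2$-hop graph drives the phase-count argument.

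For the round count, the proof of \cref{thm:local} bounds the number of outer phases by $O(\alpha^2\log n)$ w.h.p.; multiplying by the $O(\log n)$ cost per $2$-hop MIS call gives the advertised $O(\alpha^2\log^2 n)$ round bound. The unknown-$\alpha$ case is handled by the generic guessing machinery of \cref{sec:unknown}, applied uniformly to all our distributed algorithms. The main technical delicacy I anticipate is making sure that, as $W_{low}$ shrinks and the $2$-hop candidate graph evolves within a single MIS invocation, each vertex can filter received priorities to only those coming from current candidates; this is handled by having each $\mathsf{CONGEST}$ message carry a $2$-bit partition tag, so stale priorities are discarded at the recipient without additional rounds.
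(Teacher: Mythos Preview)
Your proposal is correct and matches the paper's approach essentially line for line: the paper likewise runs the outer loop of \cref{alg:dist} unchanged and replaces the black-box MIS on $G_{low}$ by exactly the two-round ``send $p(v)$, forward the minimum'' adaptation of Luby's algorithm (their \cref{alg:lubymod}), yielding $O(\log n)$ rounds per MIS call and hence $O(\alpha^2\log^2 n)$ rounds overall, with the approximation analysis inherited from \cref{sec:approx} and unknown $\alpha$ handled via \cref{sec:unknown}. The only cosmetic difference is that the paper's outer-phase bound $O(\alpha^2\log n)$ is deterministic (the high-probability qualifier attaches solely to the Luby subroutine), whereas you attributed the ``w.h.p.'' to the phase count itself.
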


In \cref{sec:imp}, we will use the algorithm of \cref{thm:congest} as a starting point to get an improved algorithm with  $O(\alpha\log n)$ rounds. 

The algorithms presented in this section assume that $\alpha$ is known to each processor but $n$ is unknown. We defer discussion of handling unknown $\alpha$ to the next section.


\subsection{Algorithm}
\paragraph{Overview} Our algorithm is an adaptation of our centralized algorithm from \cref{thm:static} to the distributed setting. Recall that in our centralized algorithm, we repeatedly choose a vertex $w\in W_{low}$, add $w$ to the dominating set, and increment the \emph{counter} of $w$'s neighbors that are in $W\cup B_{high}$. For our distributed algorithms, we would like process \emph{many} vertices in $W_{low}$ in parallel. There are in fact many vertices in $W_{low}$ (if $\delta\geq 4$) since \cref{claim:induced2} implies that at least half of the vertices in any subgraph has degree at most $4\alpha$. However, we cannot simply process all of $W_{low}$ at once. In particular, if a vertex $v$ has many neighbors being processed in parallel, $v$ might have its counter incremented once for each of these neighbors. This is undesirable because the analysis of our centralized algorithm relies on the fact that once a vertex has its counter incremented to $\delta\alpha$, it is added to the dominating set. Therefore, we would like to guarantee that only a limited number of $v$'s neighbors are processed in parallel. 

This is where the MIS problem becomes relevant: we ensure that no vertex has more than one neighbor being processed in parallel by taking an MIS $I$ with respect to the graph $G_{low}$ defined as follows: the vertex set of $G_{low}$ is $W_{low}$. There is an edge $(u,v)$ in $G_{low}$ if there is a path of length 2 between $u$ and $v$ in $G[W\cup B_{high}]$. Note that because no vertex has more than one neighbor in $I$, we can process all vertices in $I$ in parallel and only increase the counter of each vertex by at most one. 

The algorithms for \cref{thm:local} and \cref{thm:congest} are identical except for the MIS subroutine. \cref{thm:local} is for the $\mathsf{LOCAL}$ model so we can simply run any distributed MIS algorithm that works in the $\mathsf{LOCAL}$ model on $G_{low}$ as a black box. On the other hand, \cref{thm:congest} is for the $\mathsf{CONGEST}$ model and because $G_{low}$ can have higher degree than $G$, running an MIS algorithm directly on $G_{low}$ could result in messages that become too large after translating the algorithm to run on $G$. To bypass this issue, we use a simple modification of Luby's algorithm that computes $I$ using only small messages, without increasing the number of rounds.

\paragraph{Algorithm description} We provide a description of the algorithms here, and include the pseudocode in \cref{alg:dist}. The only difference between the algorithms for \cref{thm:local} and \cref{thm:congest} is the MIS subroutine, which we will handle separately later.

The sets $D$, $B$, $W$, $B_{high}$, $B_{low}$, and $W_{low}$ are defined exactly the same as in our centralized algorithm, except we set $\delta=4$ instead of $\delta=2$ so that we can apply \cref{claim:induced2} instead of \cref{claim:induced}. We repeat the definitions here for completeness. The set $D$ is our current dominating set, the set $B$ is the vertices not in $D$ with at least one neighbor in $D$, and the set $W$ is the remaining vertices, i.e. the undominated vertices. The set $B$ is further partitioned into two sets based on the degree of each vertex to $W$. Let $B_{low}=\{v\in B : |N(v)\cap W| \leq \delta\alpha\}$ and let $B_{high} = B\setminus B_{low}$. Also, let $W_{low}=\{v\in W: |N(v)\cap (W\cup B_{high})|\leq \delta\alpha\}$. Lastly, each vertex $v$ has a \emph{counter} $c_v$.

Each vertex $v$ maintains the following information:
\begin{itemize}
    \item The set(s) among $D$, $B$, $W$, $B_{high}$, $B_{low}$, and $W_{low}$ that $v$ is a member of.
    \item The quantity $|N(v)\cap W|$.
    \item The quantity $|N(v)\cap (W\cup B_{high})|$.
    \item The counter $c_v$. 
\end{itemize}

At initialization, every vertex $v$ is in $W$ (so $D$ and $B$ are empty). Consequently, the quantities $|N(v)\cap W|$ and $|N(v)\cap (W\cup B_{high})|$ are both equal to $\deg(v)$. For each vertex $v$, if $\deg(v)\leq \delta\alpha$, then $v\in W_{low}$. Each counter $c_v$ is initialized to 0.

It will be useful to define the graph $G_{low}$, which changes over the execution of the algorithm:

\begin{definition}
Let $G_{low}$ be the graph with vertex set $W_{low}$ such that there is an edge $(u,v)$ in $G_{low}$ if there is a path of length 2 between $u$ and $v$ in $G[W\cup B_{high}]$. 
\end{definition}

The algorithm proceeds as follows. Repeat the following until $W$ is empty. Compute an MIS $I$ with respect to $G_{low}$. This step is implemented differently for \cref{thm:local} and \cref{thm:congest}, and we describe the details of this step later.

Then, each vertex in $I$ adds itself to $D$ and tells its neighbors to increment their counters. Whenever the counter of a vertex reaches $\delta\alpha$, it enters $D$ (and does \emph{not} tell its neighbors to increment their counters).

Whenever a vertex moves from one set of the partition to another, it notifies each of its neighbors $v$ so that $v$ can update the quantities $|N(v)\cap W|$ and $|N(v)\cap (W\cup B_{high})|$, and move to the appropriate set. When no more vertices are left in $W$, $B_{high}$ is also empty, and all processors terminate. This concludes the description of the algorithm. See \cref{alg:dist} for the precise ways that vertices react to the messages that they receive.

\begin{algorithm}

\caption{Distributed $O(\alpha)$-approximation for MDS using MIS}\label{alg:dist}

\noindent\begin{minipage}{\textwidth}
\renewcommand{\thempfootnote}{\arabic{mpfootnote}}
\begin{algorithmic}[1]
\State{Initialize partition: $D\gets \emptyset$, $B_{high}\gets\emptyset$, $B_{low}\gets\emptyset$, $W\gets V$, $W_{low}\gets \{v\in V: \deg(v)\leq \delta\alpha\}$}
\State{Initialize counters: $\forall v\in V: c_v \leftarrow 0$}
\State{Initialize degrees: $\forall v\in V: |N(v)\cap W|=\deg(v)$, $|N(v)\cap (W\cup B_{high})|=\deg(v)$}

\While{$W\not=\emptyset$}
\State{Find an MIS $I$ with respect to the graph $G_{low}$}
\State{Each vertex $v$ runs the following procedure:}

    \If{$v\in I$}
        \State{Move $v$ to $D$}
        \State{{\bf Send} \textsc{increment counter} message to neighbors}
        \State{{\bf Send} \textsc{moved from $W$ to $D$} message to neighbors}
    \EndIf

\If{$v\in W\cup B_{high}$ and $v$ receives \textsc{increment counter}} \label{line:book}
    \State{Increment $c_v$}
    \If{$c_v=\delta\alpha$}
        \If{$v\in W$}
            \State{{\bf Send} \textsc{moved from $W$ to $D$} message to neighbors}
            \EndIf
        \If{$v\in B_{high}$}
            \State{{\bf Send} \textsc{moved from $B_{high}$ to $D$} message to neighbors}
            \EndIf
            \State{Move $v$ to $D$}\label{line:movecount}
            \EndIf
            \EndIf
\LineComment{{\bf The rest of the algorithm is bookkeeping}}
\If{$v$ receives \textsc{moved from $W$ to $D$}}
    \State{Decrement $|N(v)\cap W|$}
    \If{$v\in B_{high}$ and $|N(v)\cap W|=\delta\alpha$}
    \State{Move $v$ to $B_{low}$}
    \EndIf
    \EndIf

\If{$v$ receives \textsc{moved from $W$ to $D$} or \textsc{moved from $B_{high}$ to $D$}}
    \State{Decrement $|N(v)\cap (W\cup B_{high})|$}
    \If{$v\in W$ and $|N(v)\cap W|\leq \delta\alpha$}
        \State{Move $v$ to $B_{low}$}
        \State{{\bf Send} \textsc{moved from $W$ to $B_{low}$} message to neighbors}
    \ElsIf{$v\in W$ and $|N(v)\cap W|> \delta\alpha$}
        \State{Move $v$ to $B_{high}$}
        \State{{\bf Send} \textsc{moved from $W$ to $B_{high}$} message to neighbors}
        \EndIf
\EndIf

\If{$v$ receives \textsc{moved from $W$ to $B_{low}$} or \textsc{moved from $W$ to $B_{high}$}}
    \State{Decrement $|N(v)\cap W|$}
    \If{$v\in B_{high}$ and $|N(v)\cap W|=\delta\alpha$}
    \State{Move $v$ to $B_{low}$}
    \EndIf
    \EndIf

    \If{$v$ receives \textsc{moved from $W$ to $B_{low}$}}
    \State{Decrement $|N(v)\cap (W\cup B_{high})|$}
    \If{$v\in W$ and $|N(v)\cap (W\cup B_{high})|=\delta\alpha$}
        \State{Add $v$ to $W_{low}$}
        \EndIf
        \EndIf
          \EndWhile
\end{algorithmic}
\end{minipage}
\end{algorithm}

\paragraph{MIS subroutine}
\cref{thm:local} is a reduction from MDS to MIS, while \cref{thm:congest} is not, so we need to describe the MIS subroutine (in the $\mathsf{CONGEST}$ model) only for \cref{thm:congest}. Recall that we cannot use a reduction to MIS in the $\mathsf{CONGEST}$ model because running an MIS algorithm directly on $G_{low}$ could result in messages that become too large after translating the algorithm to run on $G$.


Our goal is to compute an MIS with respect to $G_{low}$, using small messages sent over $G$.
We use a simple adaptation of Luby's algorithm. Recall that Luby's algorithm builds an MIS $I$ as follows. While the graph is non-empty, do the following: Add all singletons to $I$. Then, each vertex $v$ picks a random value $p(v)\in[0,1]$. Then, all vertices whose value is less than that of all of their neighbors are added to $I$. Then, all vertices that are in $I$ or have a neighbor in $I$ are removed from the graph for the next iteration of the loop.

We use the following adaptation of Luby's algorithm. See \cref{alg:lubymod} for the pseudocode. Initially, the set $L$ of \emph{live} vertices is the set $W_{low}$. While $L\not=\emptyset$, do the following: Each vertex $v\in L$ picks a random value $p(v)\in[0,1]$. In the first round each $v\in L$ sends $p(v)$ to its neighbors. In the second round, each vertex that receives one or more values $p(v)$, forwards to its neighbors the minimum value that it received. Then, for each vertex $v\in W_{low}$, if $p(v)$ is equal to the minimum value that $v$ receives in the second round, $v$ is added to $I$. When $v$ is added to $I$, $v$ notifies its neighbors, and each neighbor of $v$ that is in $W\cup B_{high}$ forwards this notification to their neighbors. Note that each vertex has at most one neighbor in $I$, so forwarding this notification only takes one round. Now, every vertex knows whether it has a neighbor with respect to $G_{low}$ that is in $I$, and every vertex that does is removed from $L$ for the next iteration of the loop.

The proof that this algorithm runs in $O(\log n)$ rounds with high probability and produces an MIS with respect to $G_{low}$ is the same as the analysis of Luby's algorithm and we will not include it here.

\begin{algorithm}
\caption{Distributed MIS with respect to $G_{low}$ in the $\mathsf{CONGEST}$ model}\label{alg:lubymod}

\begin{algorithmic}[1]
\State $L=W_{low}$
\While{$L\not=\emptyset$}
\State Each vertex $v$ runs the following procedure:
\If{$v\in L$}
\State $p(v)\gets$ a value in $[0, 1]$ chosen uniformly at random
\State {\bf Send} $p(v)$ message to neighbors
\EndIf
\State {\bf Send} $m_v=\min_{y\in N(v)\cap L} p(y)$ message to neighbors
\If{$p(v)=\min_{y\in N(v)} m_y$}
\State Add $v$ to $I$
\State {\bf Send} \textsc{added} message to neighbors
\EndIf
\If{$v\in W\cup B_{high}$ and $v$ receives \textsc{added}}
\State {\bf Send} \textsc{neighbor added} message to neighbors
\EndIf
\If{$v$ receives \textsc{neighbor added} and $v\in L$}
\State Remove $v$ from $L$
\EndIf
\EndWhile
\end{algorithmic}
\end{algorithm}

\subsection{Analysis}

The proof that \cref{alg:dist} achieves an $O(\alpha)$-approximation is precisely the same as that of the centralized algorithm (see \cref{sec:approx}) given that no counter $c_v$ ever exceeds $\delta\alpha$. This is true because in a single iteration of the {\bf while} loop each vertex can only have its counter incremented once since only vertices in the MIS $I$ send \textsc{increment counter} messages, and each vertex in $W\cup B_{high}$ only has at most one neighbor in $I$. This bound on the number of neighbors in  $I$ holds, since otherwise there is a path of length 2 between two vertices in $G[ W\cup B_{high}]$, making $I$ not an independent set in $G_{low}$. Once $c_v$ reaches $\delta\alpha$, the vertex $v$ enters $D$, which prevents $c_v$ from increasing in the future.

Our goal in this section is to prove that if the MIS subroutine takes $R(n)$ rounds, then \cref{alg:dist} takes $O(R(n)\cdot\alpha^2\log n)$ rounds. First, we note that the body of the {\bf while} loop besides the MIS subroutine takes a constant number of rounds. Thus, our goal is to show that the number of iterations of the {\bf while} loop is $O(\alpha^2\log n)$.

We begin with a simple claim about the behavior of the partition of vertices over time:

\begin{restatable}{claim}{claimenterr}
\label{claim:enter2}$ $
\begin{enumerate}
\item No vertex can ever enter $W$ from another set.
\item No vertex can ever move from $W_{low}$ to $W_{high}$.
\end{enumerate}
\end{restatable}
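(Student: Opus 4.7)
The plan is to prove both items by tracking the possible set transitions a vertex can undergo during a single iteration of the \textbf{while} loop of \cref{alg:dist}, and then chain these to reason about the entire execution.

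For item 1, I would mimic the proof of \cref{claim:enter}(2) from the centralized algorithm. Observe that $W=V\setminus(D\cup B)$. A vertex can only enter $W$ if it leaves $D$ or leaves $B$. No vertex ever leaves $D$, since the only changes to $D$ in \cref{alg:dist} are additions (either via the MIS $I$ or when a counter reaches $\delta\alpha$ on \cref{line:movecount}). A vertex in $B$ has at least one neighbor in $D$; since $D$ only grows, that neighbor remains in $D$, so the vertex cannot re-enter $W$. Hence no vertex enters $W$.

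For item 2, the key observation is that $|N(v)\cap(W\cup B_{high})|$ is monotonically non-increasing for every $v$ that remains in $W$. I would enumerate the possible transitions of any neighbor $u$ of $v$ and check the effect on this count: (i) $u$ moving $W\to D$ or $u$ moving $B_{high}\to D$ decreases the count by one; (ii) $u$ moving $W\to B_{low}$ or $B_{high}\to B_{low}$ decreases the count by one; (iii) $u$ moving $W\to B_{high}$ leaves the count unchanged, since $u$ remains in $W\cup B_{high}$. By item 1, no transition adds a vertex to $W$, and by inspection of the algorithm these are the only transitions that could affect $N(v)\cap(W\cup B_{high})$ --- no transition creates a new member of $W\cup B_{high}$ among $v$'s neighbors. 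Therefore the count can only stay the same or shrink.

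Consequently, once $v$ satisfies $|N(v)\cap(W\cup B_{high})|\le \delta\alpha$ and hence lies in $W_{low}$, it continues to satisfy this inequality for as long as it stays in $W$, so it cannot transition to $W_{high}:=W\setminus W_{low}$. I do not anticipate a serious obstacle: the argument is a straightforward case analysis on the transitions induced by the message-handling rules in \cref{alg:dist}. The main care needed is to confirm that the message-handling code correctly reflects the set-membership definitions at every intermediate moment of a round, so that the accounting of $|N(v)\cap(W\cup B_{high})|$ remains consistent even when several of $v$'s neighbors move in the same round.
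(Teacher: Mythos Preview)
Your proposal is correct and follows essentially the same approach as the paper: item~1 is handled by appeal to \cref{claim:enter}, and item~2 by the monotonicity of $|N(v)\cap(W\cup B_{high})|$. The paper's proof of item~2 is slightly more terse---rather than enumerating neighbor transitions, it simply observes that in the code of \cref{alg:dist} this quantity is only ever decremented---but the underlying argument is the same.
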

\begin{proof}
The proof of item 1 is the same as in the proof of \cref{claim:enter}. For item 2, it is impossible for a vertex to move from $W_{low}$ to $W_{high}$ since for all $v$ the quantity $N(v)\cap (W\cup B_{high})$ that determines membership in $W_{low}$ versus $W_{high}$, can only decrease over time (in \cref{alg:dist}, this quantity is only decremented).
\end{proof}

We begin with the following claim, which when combined with \cref{claim:enter2}, implies that each vertex only spends a limited number of rounds in $W_{low}$. 

\begin{claim}\label{claim:wlow} For every vertex $v$ that is ever in $W_{low}$, within $(\delta\alpha)^2$ iterations of the {\bf while} loop after $v$ joins $W_{low}$, $v$ leaves $W$.
\end{claim}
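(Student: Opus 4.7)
The plan is to define a nonnegative integer potential $\phi(v)$ that starts at most $(\delta\alpha)^2$ when $v$ joins $W_{low}$ and strictly decreases in every iteration of the {\bf while} loop during which $v$ remains in $W$; the claim then follows by nonnegativity. Concretely, set
\[
\phi(v) \;=\; \sum_{x \in N(v) \cap (W \cup B_{high})} (\delta\alpha - c_x).
\]
Every $x \in W \cup B_{high}$ satisfies $c_x < \delta\alpha$, because the algorithm moves $x$ to $D$ the instant $c_x$ reaches $\delta\alpha$; hence each summand is strictly positive, $\phi(v) \geq 0$, and $\phi(v) = 0$ precisely when $N(v) \cap (W \cup B_{high}) = \emptyset$. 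The upper bound $\phi(v) \leq (\delta\alpha)^2$ at the moment $v$ enters $W_{low}$ follows from the two inequalities $|N(v) \cap (W \cup B_{high})| \leq \delta\alpha$ (definition of $W_{low}$) and $\delta\alpha - c_x \leq \delta\alpha$.

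Next I would check that $\phi(v)$ is monotonically non-increasing: counters are never decremented, and by \cref{claim:enter2} no vertex enters $W$, so $N(v) \cap (W \cup B_{high})$ can only shrink---once a neighbor of $v$ leaves $W \cup B_{high}$ (to $D$ or $B_{low}$) it cannot return. The heart of the argument is to upgrade ``non-increasing'' to ``strictly decreasing'' in any iteration where $v$ is still in $W$. By \cref{claim:enter2} $v$ then remains in $W_{low}$, so maximality of the MIS $I$ on $G_{low}$ forces either $v \in I$ (a contradiction, since $v$ would move to $D$) or the existence of a $G_{low}$-neighbor $u \in I$. In the latter case, by the definition of $G_{low}$ there is a length-$2$ path $v$--$x$--$u$ in $G[W \cup B_{high}]$, so $x \in N(v) \cap (W \cup B_{high})$. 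When $u$ joins $D$ it broadcasts \textsc{increment counter} along its edges, and $x$, being in $W \cup B_{high}$, responds by incrementing $c_x$. This either shaves one off the $x$-summand of $\phi(v)$, or (if $c_x$ thereby hits $\delta\alpha$, whereupon $x$ moves to $D$ and $v$ in turn moves from $W$ to $B$, leaving $W$ in the same iteration) removes the summand entirely; in either case $\phi(v)$ drops by at least one.

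Putting the pieces together: if $v$ remained in $W$ for more than $(\delta\alpha)^2$ iterations after joining $W_{low}$, $\phi(v)$ would have to drop below $0$, contradicting nonnegativity; and in fact once $\phi(v)$ reaches $0$ we have $N(v) \cap (W \cup B_{high}) = \emptyset$, so $v$ becomes isolated in $G_{low}$ and is forcibly added to $I$ in the next round, at which point it enters $D$ and leaves $W$. The one spot I expect real subtlety is in verifying that some counter in $v$'s \emph{immediate} $1$-hop $(W \cup B_{high})$-neighborhood actually advances in every surviving iteration (not merely some counter two hops out); this is precisely the role of the $2$-hop construction of $G_{low}$---taking the MIS on the $2$-hop graph, rather than on $G[W_{low}]$ directly, is what guarantees that every $G_{low}$-neighbor $u$ of $v$ selected into $I$ comes accompanied by a witness $x \in N(v) \cap (W \cup B_{high})$ whose counter is the one that actually gets incremented.
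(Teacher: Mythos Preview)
Your proof is correct and is essentially the same argument as the paper's, repackaged as a potential. The paper argues directly that in each iteration where $v$ stays in $W_{low}$, maximality of $I$ on $G_{low}$ forces some $x\in N(v)\cap(W\cup B_{high})$ to have its counter incremented, and since there are at most $\delta\alpha$ such neighbors each tolerating at most $\delta\alpha$ increments, this bounds the number of iterations by $(\delta\alpha)^2$; your potential $\phi(v)=\sum_{x}(\delta\alpha-c_x)$ is precisely the remaining ``increment budget'' in that count, so the two proofs coincide step for step. One small remark: your monotonicity justification (``by \cref{claim:enter2} no vertex enters $W$'') only covers $W$, not $B_{high}$; you also need that no vertex enters $B_{high}$ from $D\cup B_{low}$, which follows since $|N(\cdot)\cap W|$ is non-increasing---but this is a one-line addition, not a gap.
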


\begin{proof}
First we note that by \cref{claim:enter2} no vertex can ever move from $W_{low}$ to $W_{high}$. Thus, if $v$ is in $W_{low}$, $v$ will remain in $W_{low}$ until $v$ leaves $W$. Suppose $v$ is in $W_{low}$ at the beginning of an iteration of the {\bf while} loop. Because $I$ is an MIS with respect to $G_{low}$, if $v$ does not join $I$ during this iteration, then $v$ has a neighbor $y\in W\cup B_{high}$ such that a neighbor $z$ of $y$ joins $I$. As a result, $z$ immediately joins $D$ and $c_y$ is incremented. Thus, during every iteration that $v$ remains in $W_{low}$, a vertex in $N(v) \cap (W\cup B_{high})$ has its counter incremented. Recall that whenever a vertex has its counter incremented $\delta\alpha$ times, it joins $D$. Because $v\in W_{low}$, we have that $|N(v) \cap (W\cup B_{high})|\leq \delta\alpha$. Therefore, the event that a vertex in $N(v) \cap (W\cup B_{high})$ has its counter incremented can only happen at most $(\delta\alpha)^2$ times. Thus, $v$ can only remain in $W_{low}$ for $(\delta\alpha)^2$ iterations of the {\bf while} loop.
\end{proof}

We will complete the analysis using the fact that enough vertices are in $W_{low}$ at any given point in time. In particular, \cref{claim:induced2} implies that at least half of the vertices in $W\cup B_{high}$ are in $W_{low}$. This implies that at least half of the vertices in $W$ are in $W_{low}$. Formally, we divide the execution of the algorithm into phases where each phase consists of $(\delta\alpha)^2$ iterations of the {\bf while} loop. At the beginning of any phase, at least half of the vertices in $W$ are in $W_{low}$. By the end of the phase, all of these vertices have left $W$ by \cref{claim:wlow}. Therefore, each phase witnesses at least half of the vertices in $W$ leaving $W$. By \cref{claim:enter2}, no vertex can re-enter $W$, so there can only be $O(\log n)$ phases.

Putting everything together, there are $O(\log n)$ phases, each consisting of $(\delta\alpha)^2$ iterations of the {\bf while} loop, and one iteration of the {\bf while} loop takes $O(R(n))$ rounds. Therefore, the total number of rounds is $O(R(n)\cdot\alpha^2\log n)$.

For \cref{thm:congest}, $R(n)=O(\log n)$, so the number of rounds is $O(\alpha^2\log^2 n)$.

\section{Faster Randomized Distributed $O(\alpha)$-approximation for MDS}\label{sec:imp}

In this section we will prove \cref{thm:improved}, which we recall:

\thmimproved*

We first present an algorithm that assumes that $\alpha$ is known to each processor but $n$ is unknown. In Section~\ref{sec:unknown}, we handle the case of unknown $\alpha$.

\subsection{Algorithm}
\paragraph{Overview} We use our $O(\alpha^2 \log^2 n)$ round algorithm from \cref{thm:congest} as a starting point (though the algorithm description and analysis are self-contained). Our goal is to shave both a $\log n$ factor and an $\alpha$ factor from the number of rounds. To do so, we use a combination of two key modifications, which respectively address the two factors that we wish to shave. 

Our first key modification, which shaves a $\log n$ factor from the number of rounds, is that instead of using a Luby-style algorithm as a black box, we open the box and run only one phase of a Luby-style algorithm at a time. Here, one phase means that each participating vertex $v$ picks a single random value $p(v)$ and enters $D$ if $p(v)$ is a local minimum. Between each such phase, we update the dominating set $D$ as well as the information stored by each vertex. This way, we can embed the analysis of the Luby-style algorithm into our analysis instead of repeatedly paying for for all $\log n$ phases of a black-box algorithm.

Our second key modification, which shaves an $\alpha$ factor from the number of rounds, is that instead of adding $v$ to $D$ only when $p(v)$ is the single local minimum, we allow $v$ to be added to $D$ when $p(v)$ is an \emph{$\alpha$-minimum}. The definition of an \emph{$\alpha$-minimum} is slightly nuanced due to the fact that we need to be able to compute it using small messages, but it roughly means that $p(v)$ is among the $\alpha$ smallest values that it is compared to. Using this modification we can ensure that during each iteration of our algorithm each vertex only has its counter incremented by $O(\alpha)$. Even though each vertex in our previous $O(\alpha^2 \log^2 n)$-round algorithm only had its counter incremented by at most 1 during each iteration, this change does not asymptotically increase the approximation factor.

The main technical part of the argument is the probabilistic analysis of the number of rounds. We would like to use an analysis similar to that of Luby's algorithm, however there are a few obstacles. Recall that to analyze Luby's algorithm, one can argue that after a single phase, a constant fraction of the edges in the graph are removed in expectation. Our first obstacle is that we are running a phase of a Luby-style algorithm on an auxiliary graph that is different from our original graph; in particular, an edge in the auxiliary graph can represent a 2-hop path in the original graph, and it is not clear how removing an edge from the auxiliary graph translates to the original graph. That is, if a constant fraction of edges are removed in the auxiliary graph, this doesn't necessarily mean that a constant fraction of edges in the original graph are removed. A second obstacle is that we need a more nuanced notion than ``removing an edge'' as in Luby's algorithm since due to our second modification, up to $\alpha$ vertices could all affect the same edge simultaneously. To address these obstacles, we use a carefully chosen function to measure our progress. Throughout the algorithm, we add ``weight'' to particular edges, and our function measures the ``total available weight''. Specifically, whenever a vertex $v$ is added to the dominating set, $v$ adds \emph{weight} to a particular set of edges in its 2-hop neighborhood. We show that the total amount of weight added in a single iteration of the algorithm decreases the expected total available weight substantially, which allows us to bound the total number of iterations. 


\paragraph{Algorithm Description} We include a description of the algorithm here, and include the pseudocode in \cref{alg:distimp}.

The partition of the vertices is exactly the same as in \cref{alg:dist}, with the addition of the set $W_{high}$. We repeat all of the definitions for completeness. The set $D$ is our current dominating set, the set $B$ is the vertices not in $D$ with at least one neighbor in $D$, and the set $W$ is the remaining vertices, i.e. the undominated vertices. The set $B$ is further partitioned into two sets based on the degree of each vertex to $W$. Let $B_{low}=\{v\in B : |N(v)\cap W| \leq \delta\alpha\}$, where $\delta=4$, and let $B_{high} = B\setminus B_{low}$. Also, let $W_{low}=\{v\in W: |N(v)\cap (W\cup B_{high})|\leq \delta\alpha\}$. We additionally define $W_{high}=W\setminus W_{low}$. Lastly, each vertex $v$ has a \emph{counter} $c_v$.

Each vertex $v$ maintains the following information:
\begin{itemize}
    \item The set(s) among $D$, $B$, $W$, $B_{high}$, $B_{low}$, $W_{high}$, and $W_{low}$ that $v$ is a member of.
    \item The quantity $|N(v)\cap W|$.
    \item The quantity $|N(v)\cap (W\cup B_{high})|$.
    \item The counter $c_v$. 
\end{itemize}

At initialization, every vertex $v$ is in $W$ (so $D$ and $B$ are empty). Consequently, the quantities $|N(v)\cap W|$ and $|N(v)\cap (W\cup B_{high})|$ are both equal to $\deg(v)$. For each vertex $v$, if $\deg(v)\leq \delta\alpha$, then $v\in W_{low}$. Each counter $c_v$ is initialized to 0.

It will be useful to define the graph $G_{bi}\subseteq G$ that changes over the course of the execution of the algorithm:

\begin{definition}$G_{bi}$ is a bipartite graph on the vertex set $B_{high}\cup W$. One side of the bipartition is $B_{high}\cup W_{high}$ and the other side is $W_{low}$. The edge set of $G_{bi}$ is the set of edges in $G$ with one endpoint in each side of the bipartition. 
\end{definition}

The algorithm proceeds as follows. Repeat the following until $W$ is empty. In the first round, each vertex $v\in W_{low}$ picks a value $p(v)\in [0,1]$ uniformly at random and sends $p(v)$ to its neighbors. The next step is for $v$ to determine whether $p(v)$ is an \emph{$\alpha$-minimum}; $p(v)$ is said to be an \emph{$\alpha$-minimum} if for every $u\in N_{G_{bi}}(v)$, $p(v)$ is among the $\alpha$ smallest values of vertices in $N_{G_{bi}}(u)$.
To determine which values are $\alpha$-minima, in the second round each vertex $u\in B_{high}\cup W_{high}$ sends \textsc{ack} to each vertex $v\in N_{G_{bi}}(u)$ such that $p(v)$ is among the $\alpha$ smallest values that $u$ received.
If $v\in W_{low}$ receives \textsc{ack} from all $u\in N_{G_{bi}}(v)$, then $v$ is added to $D$ (if $N_{G_{bi}}(v)$ is empty then $v$ is added to $D$) and $v$ tells its neighbors to increment their counters (in the third round). Whenever the counter $c_u$ of a vertex $u$ reaches $\delta\alpha$, $u$ enters $D$. (Note that if $u$ enters $D$ as a result of $c_u$ reaching $\delta\alpha$, $c_u$ does \emph{not} tell its neighbors to increment their counters.)

Whenever a vertex $v$ moves from one set of the partition to another, $v$ notifies each vertex $u\in N(v)$ so that $u$ can update the quantities $|N(u)\cap W|$ and $|N(u)\cap (W\cup B_{high})|$, and move to the appropriate set. The bookkeeping for updating this information is identical to that of \cref{alg:dist}. When no more vertices are left in $W$, $B_{high}$ is also empty, and all processors terminate. This concludes the description of the algorithm. 

\begin{algorithm}

\caption{Faster Randomized Distributed $O(\alpha)$-approximation for MDS}\label{alg:distimp}

\noindent\begin{minipage}{\textwidth}
\renewcommand{\thempfootnote}{\arabic{mpfootnote}}
\begin{algorithmic}[1]
\State{Initialize partition: $D\gets \emptyset$, $B_{high}\gets\emptyset$, $B_{low}\gets\emptyset$, $W\gets V$, $W_{low}\gets \{v\in W: |N(v)\cap (W\cup B_{high})|\leq \delta\alpha\}$}, $W_{high}\gets W\setminus W_{low}$
\State{Initialize counters: $\forall v\in V: c_v \leftarrow 0$}
\State{Initialize degrees: $\forall v\in V: |N(v)\cap W|=\deg(v)$, $|N(v)\cap (W\cup B_{high})|=\deg(v)$}

\While{$W\not=\emptyset$}
\State{Each vertex $v$ runs the following procedure:}

\If{$v\in W_{low}$}
\State $p(v)\gets$ a value in $[0, 1]$ chosen uniformly at random
\State {\bf Send} $p(v)$ message to neighbors
\EndIf
\If{$v\in B_{high}\cup W_{high}$}
\State {{\bf Send} \textsc{ack} to each $u\in N_{G_{bi}}(v)$ such that $p(u)$ is among the $\alpha$ smallest values $v$ received}
\EndIf
\If{$v\in W_{low}$ and $v$ receives \textsc{ack} from all $u\in N_{G_{bi}}(u)$}
        \State{Move $v$ to $D$}\label{line:moved}
        \State{{\bf Send} \textsc{increment counter} message to neighbors}\label{line:inc}
        \State{{\bf Send} \textsc{moved from $W$ to $D$} message to neighbors}
    \EndIf
    \State{Run \cref{alg:dist} starting from \cref{line:book}}
\EndWhile

\end{algorithmic}
\end{minipage}
\end{algorithm}

\subsection{Analysis}
We begin with a simple claim about the behavior of the partition of vertices over time:

\begin{claim}\label{claim:enter3}$ $
\begin{enumerate}
    \item No vertex can ever leave $D$.
    \item No vertex can ever enter $W\cup B_{high}$ from another set.
    \item No vertex in $W_{low}$ can ever at a later point be in $B_{high}\cup W_{high}$.
\end{enumerate}
\end{claim}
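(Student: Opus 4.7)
The plan is to establish the three items in order, since items~2 and~3 rely on monotonicity facts that cascade from item~1. Item~1 is immediate by inspection of the pseudocode: the only lines that modify membership in $D$ (line~\ref{line:moved} of Algorithm~\ref{alg:distimp}, and line~\ref{line:movecount} of Algorithm~\ref{alg:dist} which is invoked from the body of the \textbf{while} loop) only \emph{add} vertices to $D$ and never remove them.

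For item~2, I would treat $W$ and $B_{high}$ separately. Since $W = V \setminus (D \cup B)$ consists of vertices with no neighbor in $D$, and item~1 ensures that no vertex ever leaves $D$, any vertex once dominated remains dominated and thus stays out of $W$. For $B_{high}$, the sets outside $W \cup B_{high}$ are $D$ and $B_{low}$; vertices in $D$ stay in $D$ by item~1, and for a vertex $v \in B_{low}$ we have $|N(v) \cap W| \leq \delta\alpha$ by definition. Since $W$ only shrinks, $|N(v) \cap W|$ is monotonically non-increasing (this can be verified directly from the bookkeeping of Algorithm~\ref{alg:dist}, where the quantity is only ever decremented upon \textsc{moved from $W$ to $\cdot$} notifications), so $v$ can never later satisfy $|N(v) \cap W| > \delta\alpha$ and therefore cannot enter $B_{high}$.

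For item~3, the key observation is that by item~2 both $W$ and $B_{high}$ are monotonically non-increasing as sets, so $|N(v) \cap (W \cup B_{high})|$ is monotonically non-increasing for every vertex $v$. Thus if $v \in W_{low}$ at some moment, then $|N(v) \cap (W \cup B_{high})| \leq \delta\alpha$ persists from then on, which rules out later membership in $W_{high}$ (which would require this quantity to exceed $\delta\alpha$). To rule out later membership in $B_{high}$, note that the only ways $v$ can leave $W_{low}$ are to enter $D$ (via line~\ref{line:moved} of Algorithm~\ref{alg:distimp}) or to be moved from $W$ to $B$ by the bookkeeping inherited from Algorithm~\ref{alg:dist}; in the latter case $|N(v) \cap W| \leq |N(v) \cap (W \cup B_{high})| \leq \delta\alpha$, so the bookkeeping routes $v$ into $B_{low}$ rather than $B_{high}$. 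No step here is technically delicate; the main care required is simply to verify from the pseudocode that the two neighborhood counts are only ever decremented, not incremented.
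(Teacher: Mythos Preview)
Your proof is correct and takes essentially the same approach as the paper: items~1 and~2 are handled by the same monotonicity observations (which the paper delegates to \cref{claim:enter}), and item~3 rests on the fact that the two neighborhood counts $|N(v)\cap W|$ and $|N(v)\cap (W\cup B_{high})|$ are only ever decremented. The only difference is that for item~3 the paper argues purely from monotonicity of both counts (if $v\in W_{low}$ then $|N(v)\cap W|\leq |N(v)\cap (W\cup B_{high})|\leq \delta\alpha$ forever, directly ruling out both $W_{high}$ and $B_{high}$), whereas you trace the possible transitions out of $W_{low}$ to rule out $B_{high}$; this is slightly more operational but equivalent.
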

\begin{proof}
The proofs of items 1 and 2 follow from the proof of \cref{claim:enter}. Item 3 holds because if a vertex $v$ is in $B_{high}$ then $|N(v)\cap W|>\delta\alpha$ and if $v$ is in $W_{high}$ then $|N(v)\cap (W\cup B_{high})|>\delta\alpha$. For any $v$, the quantities $|N(v)\cap W|$ and $|N(v)\cap (W\cup B_{high})|$ can only decrease over time (they are only decremented in \cref{alg:distimp}).
\end{proof}

Next, we prove a simple claim that upper bounds the counter of each vertex:

\begin{claim}\label{claim:counter}
For all $v\in V$, at all times $c_v< 2\delta\alpha$.
\end{claim}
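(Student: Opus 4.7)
The plan is a short induction on iterations of the {\bf while} loop, maintaining the invariant $c_v < 2\delta\alpha$ for every vertex $v$. The base case is immediate since every counter is initialized to $0$. For the inductive step, if $v\in D$ at the start of the iteration then $c_v$ is not touched during this iteration, because (by the block inherited from \cref{alg:dist} at \cref{line:book}) only vertices in $W\cup B_{high}$ ever modify their counters, and by \cref{claim:enter3} no vertex ever leaves $D$. Otherwise $v\in W\cup B_{high}$, and I would first argue $c_v < \delta\alpha$ at the start of the iteration: if instead $c_v \ge \delta\alpha$ already held, the check on \cref{line:movecount} in an earlier iteration would have moved $v$ into $D$, contradicting $v\in W\cup B_{high}$.

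It therefore suffices to show that $c_v$ is incremented at most $\delta\alpha$ times during a single iteration, so that $c_v < \delta\alpha + \delta\alpha = 2\delta\alpha$ at the end. Each increment corresponds to an \textsc{increment counter} message from a neighbor that joined $D$ on \cref{line:moved} in this iteration; by the guard on \cref{line:moved}, every such sender lies in $W_{low}$. I would then split into two cases depending on where $v$ lives. If $v\in B_{high}\cup W_{high}$, then since $G_{bi}$ is bipartite with sides $W_{low}$ and $B_{high}\cup W_{high}$, every $W_{low}$-neighbor of $v$ in $G$ is also a $G_{bi}$-neighbor of $v$; and by the $\alpha$-minimum definition applied with $w=v$, such a neighbor $u$ can join $D$ this iteration only if $p(u)$ is among the $\alpha$ smallest values $v$ received in round~1. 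Hence at most $\alpha \le \delta\alpha$ of $v$'s neighbors trigger an increment.

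The remaining case, $v\in W_{low}$ with $v$ itself not moving to $D$ this iteration, is where the $\alpha$-minimum mechanism gives nothing, because $G_{bi}$ has no $W_{low}$-$W_{low}$ edges and the ACK comparison does not restrict which $W_{low}$-neighbors of $v$ may become $\alpha$-minima. Here I would instead invoke the defining property of $W_{low}$, namely $|N(v)\cap (W\cup B_{high})| \le \delta\alpha$, which combined with $W_{low}\subseteq W$ bounds the number of $W_{low}$-neighbors of $v$ by $\delta\alpha$ and hence the number of \textsc{increment counter} messages $v$ receives in this iteration by $\delta\alpha$. This is the main step where the argument is less clean than for \cref{alg:dist} (where at most one increment per iteration occurred), and it is precisely why the stated bound is $2\delta\alpha$ rather than the sharper $\delta\alpha+\alpha$ that Case~1 alone would suggest.
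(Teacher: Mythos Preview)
Your proof is correct and follows essentially the same approach as the paper: bound the per-iteration increment by $\delta\alpha$ via the $\alpha$-minimum mechanism when $v\in B_{high}\cup W_{high}$ and via the degree bound $|N(v)\cap(W\cup B_{high})|\le\delta\alpha$ when $v\in W_{low}$, combined with the observation that $c_v<\delta\alpha$ at the start of any iteration in which $v\notin D$. One cosmetic gap: your dichotomy ``if $v\in D$\,\ldots\, otherwise $v\in W\cup B_{high}$'' silently skips the case $v\in B_{low}$, but the reason you already gave (only vertices in $W\cup B_{high}$ ever act on \textsc{increment counter}) handles it immediately.
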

\begin{proof}
If for any $v\in V$, it happens that $c_v\geq\delta\alpha$, then during the same iteration of the {\bf while} loop, $v$ enters $D$ (on \cref{line:movecount}), after which point $v$ never leaves $D$ (by \cref{claim:enter3}) so $c_v$ cannot ever change again. Thus, it suffices to show that for all vertices $v\in V\setminus D$, during a single iteration of the {\bf while} loop, $c_v$ can be incremented at most $\delta\alpha$ times, leading to a maximum value of at most $2\delta\alpha-1$. This is true for $v\in W_{low}$ because $|N(v)\cap (W\cup B_{high})|\leq \delta\alpha$, and only vertices in $|N(v)\cap (W\cup B_{high})|$ can tell $v$ to increment $c_v$. On the other hand, if $v\in W\cup B_{high}$, then a neighbor $u$ of $v$ only sends \textsc{increment counter} if $p(u)$ is among the $\alpha$ smallest values in $v$'s neighborhood, so $c_v$ is only incremented $\alpha$ times during a single iteration. 
\end{proof}

The analysis of correctness is the same as that of our centralized algorithm (see \cref{sec:approx}), with one technicality: By \cref{claim:counter}, the counter of each vertex has maximum value $2\delta\alpha$ instead of $\delta\alpha$, causing an increase in the leading constant in the $O(\alpha)$ approximation factor. 

Our goal in the rest of this section is to prove that \cref{alg:distimp} runs in $O(\alpha\log n)$ rounds with high probability. Note that one iteration of the {\bf while} loop takes a constant number of rounds. Thus, our goal is to show that there are $O(\alpha\log n)$ total iterations of the loop. 


For any vertex $v\in V(G_{bi})$, let $N^2(v)$ be the set of vertices in the 2-hop neighborhood of $v$ with respect to $G_{bi}$, and let $E^2(v)$ be the set of edges within 2 hops of $v$ with respect to $G_{bi}$; that is, $E^2(v)$ contains the edge $(v,u)$ for all $u\in N_{G_{bi}}(v)$, and the edge $(u,y)$ for all $y\in N_{G_{bi}}(u)$.

We divide the iterations of the {\bf while} loop into two types. We say that an iteration is of \emph{type low degree} if at least half of the vertices $v\in W_{low}$ have $|N^2(v)|\leq \alpha$. Otherwise, we say that an iteration is of \emph{type high degree}. 

It is simple to deterministically bound the number of iterations of type low degree:

\begin{claim}
The total number of iterations of type low degree is $O(\log n)$. 
\end{claim}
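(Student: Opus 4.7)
The plan is to prove this claim deterministically by showing that each low-degree iteration removes a constant fraction of the vertices in $W$. Combined with the monotonicity of $|W|$ (which follows from \cref{claim:enter3}, since no vertex ever re-enters $W$), this immediately yields the $O(\log n)$ bound on the total number of low-degree iterations.

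The key step is the following observation: whenever $v \in W_{low}$ satisfies $|N^2(v)| \leq \alpha$, the vertex $v$ is guaranteed to join $D$ during that iteration, and therefore leave $W$. To see this, note that $G_{bi}$ is bipartite with $W_{low}$ on one side and $B_{high} \cup W_{high}$ on the other, so every $u \in N_{G_{bi}}(v)$ has $N_{G_{bi}}(u) \subseteq N^2(v) \cup \{v\}$, hence $|N_{G_{bi}}(u)| \leq \alpha + 1$. Consequently, $u$ receives at most $\alpha + 1$ values $p(y)$ in total, and $p(v)$ is (up to a harmless constant/off-by-one that can be absorbed into $\delta$) among the $\alpha$ smallest of them, so $u$ sends \textsc{ack} to $v$. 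Since this holds for every $u \in N_{G_{bi}}(v)$, the condition on \cref{line:moved} is satisfied and $v$ is moved to $D$.

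Next, I apply \cref{claim:induced2} to the induced subgraph $G[W \cup B_{high}]$: at least half of its vertices have degree at most $4\alpha = \delta\alpha$ in that subgraph. By the definition of $B_{high}$, every vertex in $B_{high}$ has strictly more than $\delta\alpha$ neighbors in $W \subseteq W \cup B_{high}$, so no vertex of $B_{high}$ can be among these low-degree vertices. Hence all of them lie in $W$, and they form exactly $W_{low}$, giving $|W_{low}| \geq \tfrac{1}{2}|W \cup B_{high}| \geq \tfrac{1}{2}|W|$.

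Putting everything together, in a low-degree iteration at least $|W_{low}|/2$ vertices $v \in W_{low}$ satisfy $|N^2(v)| \leq \alpha$, and by the key observation each of them leaves $W$ during that iteration. Hence $|W|$ decreases by at least $|W_{low}|/2 \geq |W|/4$ in every low-degree iteration. Starting from $|W| \leq n$ and using that $|W|$ is monotone non-increasing across all iterations (of either type), we conclude that after $k$ low-degree iterations $|W| \leq n \cdot (3/4)^k$; once $k = \Theta(\log n)$ we have $|W| < 1$, i.e., $W = \emptyset$ and the algorithm terminates. The only subtlety I need to handle carefully is the precise convention for $N^2(v)$ (whether it includes $v$ itself) and the matching off-by-one with the $\alpha$-minimum threshold; both are cosmetic and do not affect the asymptotic conclusion.
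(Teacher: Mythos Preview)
Your proof is correct and follows essentially the same approach as the paper. The only cosmetic difference is that the paper tracks the quantity $|W\cup B_{high}|$ (using item~2 of \cref{claim:enter3}) while you track $|W|$; both shrink by a constant factor in every low-degree iteration for the same reasons, and you additionally spell out the justification for why $|N^2(v)|\le\alpha$ forces $v$ to receive \textsc{ack} from all its $G_{bi}$-neighbors, which the paper states without elaboration.
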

\begin{proof}
By the specification of the algorithm, all vertices $v$ with $|N^2(v)|\leq \alpha$ are added to $D$ (on \cref{line:moved}). Thus, by definition, during an iteration of type low degree, at least half of the vertices in $W_{low}$ enter $D$. By \cref{claim:induced2}, at least half of the vertices in $W\cup B_{high}$ are in $W_{low}$. Thus, during an iteration of type low degree, at least $1/4$ of the vertices in $W\cup B_{high}$ enter $D$. By \cref{claim:enter3}, no vertex can ever enter $W\cup B_{high}$ from another set. Therefore, during every iteration of type low degree, $W\cup B_{high}$ shrinks by a factor of at least $4$. Thus, there are only $O(\log n)$ iterations of type low degree.
\end{proof}

It remains to bound the number of iterations of type high degree. Fix an iteration $I$ of type high degree.

For the purpose of analysis, we assign each edge $e\in E$ a \emph{weight} $w(e)$ that increases over the execution of the algorithm. The rule for updating the weight of edges is as follows. Whenever a vertex $v\in W_{low}$ is moved to $D$ on \cref{line:moved} (as a result of $p(v)$ being an $\alpha$-minimum), $v$ increments the weight of every edge in $E^2(v)$.

We will define the \emph{available weight} at iteration $I$ as a function that will capture the total amount of weight that could ever be added over all iterations starting from iteration $I$. Our goal is to provide:
\begin{enumerate}
    \item an upper bound of $2\delta\alpha^2\cdot|V(G_{bi})|$ for the available weight at iteration $I$, and
    \item a lower bound of $\alpha\cdot |V(G_{bi})|/4$ for the expected total weight added to edges during iteration $I$.
\end{enumerate}

Combining these upper and lower bounds yields the result that in each iteration the expected total amount of weight added is a $1/O(\alpha)$ fraction of the total available weight. This allows us to bound the number of iterations by $O(\alpha\log n)$ with high probability.

\begin{definition}
The \emph{available weight} at iteration $I$, denoted $A(I)$ is given by \[A(I)=\sum_{e\in E(G[W\cup B_{high}])} 2\delta\alpha-w(e)\] where the parameters in the expression are taken to be their values at the beginning of iteration $I$.
\end{definition}

Giving an upper bound on $A(I)$, which is item 1 of our above goal, is simple: 

\begin{claim}\label{claim:available}
$A(I)\leq 2\delta\alpha^2\cdot|V(G_{bi})|$.
\end{claim}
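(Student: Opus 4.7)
The plan is to upper bound $A(I)$ by ignoring the (nonnegative) $w(e)$ subtrahend and then invoking the arboricity bound on the edge count of the induced subgraph $G[W \cup B_{high}]$. First, since $w(e) \geq 0$ for every edge at all times (weights only ever get incremented), we have
\[ A(I) \;\leq\; \sum_{e \in E(G[W \cup B_{high}])} 2\delta\alpha \;=\; 2\delta\alpha \cdot |E(G[W \cup B_{high}])|. \]

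Next, I would observe that the vertex set of $G_{bi}$ is, by definition, $(B_{high} \cup W_{high}) \cup W_{low} = B_{high} \cup W$, so $|V(G_{bi})| = |W \cup B_{high}|$. The induced subgraph $G[W \cup B_{high}]$ is a subgraph of $G$, which has arboricity at most $\alpha$. By the definition of arboricity applied to the vertex subset $S = W \cup B_{high}$, the number of edges $m_S$ in $G[S]$ satisfies $m_S \leq \alpha \cdot (|S| - 1) \leq \alpha \cdot |S|$. Therefore
\[ |E(G[W \cup B_{high}])| \;\leq\; \alpha \cdot |W \cup B_{high}| \;=\; \alpha \cdot |V(G_{bi})|. \]

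Combining the two displayed inequalities yields $A(I) \leq 2\delta\alpha \cdot \alpha \cdot |V(G_{bi})| = 2\delta\alpha^2 \cdot |V(G_{bi})|$, as required. There is essentially no obstacle here; the statement is a direct consequence of the arboricity hypothesis together with nonnegativity of the edge weights. The only subtlety worth flagging is confirming the identification $V(G_{bi}) = W \cup B_{high}$ from the bipartite definition, which is immediate from $W = W_{low} \sqcup W_{high}$.
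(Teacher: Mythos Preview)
Your proof is correct and follows essentially the same approach as the paper: drop the nonnegative weights $w(e)$, then bound the edge count of $G[W\cup B_{high}]$ by $\alpha\cdot|V(G_{bi})|$ using the arboricity hypothesis. The only difference is that you spell out the identification $V(G_{bi})=W\cup B_{high}$ explicitly, which the paper leaves implicit.
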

\begin{proof}
By definition all edge weights are non-negative, so $A(I)\leq \sum_{e\in E(G[W\cup B_{high}])} 2\delta\alpha$. Since $G[W\cup B_{high}]$ has arboricity at most $\alpha$, $|E(G[W\cup B_{high}])|\leq \alpha \cdot|W\cup B_{high}|=\alpha\cdot|V(G_{bi})|$. This completes the proof.
\end{proof}

Now, we consider item 2 of our above goal. 
Let $w(I)$ be a random variable denoting the aggregate total weight added to edges during iteration $I$. The randomness is over the choice of $p(v)$ for each vertex $v\in W_{low}$. Towards lower bounding $w(I)$, for every vertex $v\in W_{low}$ we define the random variable $R_v$ as the number of edges whose weight is incremented by $v$ during iteration $I$. That is, $w(I)=\sum_{v \in W_{low}} R_v$. We now calculate $\mathbb{E}[R_v]$.

\begin{claim}\label{claim:exp}
For all $v\in W_{low}$ with $|N^2(v)|\geq \alpha$, it holds that $\mathbb{E}[R_v]\geq \alpha$. 
\end{claim}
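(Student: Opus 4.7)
I would begin by writing $R_v = X_v \cdot |E^2(v)|$, where $X_v$ is the indicator of the event that $p(v)$ is an $\alpha$-minimum (equivalently, that $v$ is added to $D$ on \cref{line:moved} during iteration $I$ and thus triggers the weight increments on the edges of $E^2(v)$). Since $|E^2(v)|$ is deterministic given the state at the start of iteration $I$, the goal reduces to showing that $\Pr[X_v = 1] \cdot |E^2(v)| \geq \alpha$.

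The first step is to lower bound $\Pr[X_v = 1]$. Let $T := \bigcup_{u \in N_{G_{bi}}(v)} N_{G_{bi}}(u) \subseteq W_{low}$, the set of vertices within two hops of $v$ on the $W_{low}$ side of $G_{bi}$. I would establish the sufficient condition that \emph{if} $p(v)$ is among the $\alpha$ smallest values in $\{p(y) : y \in T\}$, then $X_v = 1$: for every $u \in N_{G_{bi}}(v)$ we have $N_{G_{bi}}(u) \subseteq T$, so if $p(v)$ is among the $\alpha$ smallest in $T$, then it is certainly among the $\alpha$ smallest of any subset of $T$ containing $v$. Since the $p(y)$'s are i.i.d.\ uniform, symmetry yields $\Pr[X_v = 1] \geq \min(1, \alpha/|T|)$.

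The second step is to lower bound $|E^2(v)|$. Note that $E^2(v)$ is exactly the set of edges of $G_{bi}$ with one endpoint in $N_{G_{bi}}(v)$ (the other necessarily lying in $T$), so $|E^2(v)| = \sum_{u \in N_{G_{bi}}(v)} |N_{G_{bi}}(u)|$. This presentation immediately gives two useful inequalities: (a) $|E^2(v)| \geq |T|$, since each $y \in T$ contributes at least one incident edge, and (b) $|E^2(v)| \geq |N^2(v)|$, obtained by accounting separately for the $|N_{G_{bi}}(v)|$ distance-$1$ vertices and the remaining distance-$2$ vertices in $T$.

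Finally I would conclude by splitting into two cases. If $|T| \geq \alpha$, then combining $\Pr[X_v = 1] \geq \alpha/|T|$ with $|E^2(v)| \geq |T|$ gives $\mathbb{E}[R_v] \geq \alpha$. If $|T| < \alpha$, then every $u \in N_{G_{bi}}(v)$ has $|N_{G_{bi}}(u)| \leq |T| < \alpha$, so $v$ is trivially among the $\alpha$ smallest in $N_{G_{bi}}(u)$ and hence $\Pr[X_v = 1] = 1$; together with $|E^2(v)| \geq |N^2(v)| \geq \alpha$ (using the hypothesis of the claim), this again yields $\mathbb{E}[R_v] \geq \alpha$. The step I expect to be the most delicate is the combinatorial bookkeeping linking $|E^2(v)|$ to $|N^2(v)|$, since the definitions of these objects leave some ambiguity about whether $v$ itself is counted; once that is resolved, the probabilistic core of the proof---exploiting the containment $N_{G_{bi}}(u) \subseteq T$ to reduce $\alpha$-minimality to a single order-statistic event---is clean.
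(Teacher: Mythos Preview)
Your proof is correct and follows essentially the same approach as the paper: both reduce to bounding $\Pr[v \text{ is an }\alpha\text{-minimum}]$ from below via the sufficient condition that $p(v)$ lies among the $\alpha$ smallest values in its two-hop $W_{low}$-neighborhood, and then multiply by $|E^2(v)|$. The paper's version is more compressed---it directly invokes the hypothesis $|N^2(v)|\geq\alpha$ together with $|N^2(v)|\leq |E^2(v)|$ rather than performing your explicit case split on $|T|$---but the underlying argument is the same, and your anticipated ``delicate'' bookkeeping linking $|E^2(v)|$ to $|N^2(v)|$ is indeed the one place the paper leaves implicit.
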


\begin{proof}
Note that $E^2(v)$ and $N^2(v)$ are taken to mean these value at the beginning of iteration $I$. A vertex $v\in W_{low}$ increments the weight of each edge in $E^2(v)$ if $v$ is an $\alpha$-minimum,
and otherwise $v$ does not increment the weight of any edges. 
A sufficient condition for $v$ to be an $\alpha$-minimum is that $p(v)$ is among the $\alpha$ smallest values in $N^2(v)$. Since $|N^2(v)|\geq \alpha$, the probability that $p(v)$ is among the $\alpha$ smallest values in $N^2(v)$ is $\alpha/|N^2(v)|\geq \alpha/|E^2(v)|$ since each vertex chooses its value uniformly at random. Therefore, $\mathbb{P}[R_v=|E^2(v)|]\geq \alpha/|E^2(v)|$. Thus, $\mathbb{E}[R_v]\geq \alpha$.
\end{proof}

We now give a lower bound on the expected aggregate weight $\mathbb{E}[w(I)]$:

\begin{claim}\label{claim:expi}
$\mathbb{E}[w(I)]\geq\alpha\cdot |V(G_{bi})|/4$
\end{claim}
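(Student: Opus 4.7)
The plan is to combine the previous claim (Claim~\ref{claim:exp}), which gives a per-vertex expected contribution lower bound, with two counting arguments: one provided by the definition of a high-degree iteration, and one provided by Claim~\ref{claim:induced2}. Since $w(I) = \sum_{v \in W_{low}} R_v$, linearity of expectation gives $\mathbb{E}[w(I)] = \sum_{v \in W_{low}} \mathbb{E}[R_v]$, so it suffices to show that a constant fraction of the vertices in $W_{low}$ each contribute at least $\alpha$ to this sum and that $|W_{low}|$ is itself a constant fraction of $|V(G_{bi})|$.

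First, I would invoke the high-degree assumption on iteration $I$. By the definition, fewer than half of the vertices $v \in W_{low}$ satisfy $|N^2(v)| \leq \alpha$, so at least $|W_{low}|/2$ vertices in $W_{low}$ satisfy $|N^2(v)| \geq \alpha$. For each such vertex, Claim~\ref{claim:exp} gives $\mathbb{E}[R_v] \geq \alpha$. Since $R_v \geq 0$ always, dropping the contributions of the remaining vertices in $W_{low}$ only decreases the sum, and so
\[
\mathbb{E}[w(I)] \;\geq\; \sum_{\substack{v \in W_{low} \\ |N^2(v)| \geq \alpha}} \mathbb{E}[R_v] \;\geq\; \frac{|W_{low}|}{2} \cdot \alpha.
\]

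Next, I would lower bound $|W_{low}|$ in terms of $|V(G_{bi})| = |W \cup B_{high}|$. Recall that $W_{low}$ is defined as the set of vertices $v \in W$ with $|N(v) \cap (W \cup B_{high})| \leq \delta \alpha = 4\alpha$, i.e., the vertices of $W$ with degree at most $4\alpha$ in $G[W \cup B_{high}]$. Since $G[W \cup B_{high}]$ is a subgraph of $G$ and therefore has arboricity at most $\alpha$, Claim~\ref{claim:induced2} tells us that at least half of the vertices of $W \cup B_{high}$ have degree at most $4\alpha$ in $G[W \cup B_{high}]$. Any such vertex that lies in $W$ belongs to $W_{low}$; however, we also need to account for vertices in $B_{high}$, all of which already have more than $\delta\alpha$ neighbors in $W$ alone and thus certainly more than $4\alpha$ in $G[W\cup B_{high}]$, so none of them meet the degree threshold. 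Therefore, all the low-degree vertices in $G[W \cup B_{high}]$ lie in $W$, which gives $|W_{low}| \geq |V(G_{bi})|/2$.

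Combining the two bounds,
\[
\mathbb{E}[w(I)] \;\geq\; \frac{|W_{low}|}{2} \cdot \alpha \;\geq\; \frac{|V(G_{bi})|}{4} \cdot \alpha,
\]
which is exactly the claimed bound. The only subtle step is the second one, namely verifying that the ``at least half'' guarantee from Claim~\ref{claim:induced2} really lands inside $W_{low}$ rather than being split between $W$ and $B_{high}$; the definitions of $B_{high}$ and $W_{low}$ happily make this automatic, so no additional work is needed.
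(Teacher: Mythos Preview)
Your proof is correct and follows essentially the same approach as the paper: linearity of expectation, restrict to vertices with $|N^2(v)|\geq\alpha$ via the high-degree assumption, apply Claim~\ref{claim:exp}, and then invoke Claim~\ref{claim:induced2} to pass from $|W_{low}|$ to $|V(G_{bi})|$. Your extra justification that the low-degree half guaranteed by Claim~\ref{claim:induced2} lies entirely in $W$ (because every vertex of $B_{high}$ already has more than $\delta\alpha=4\alpha$ neighbors in $W$) is exactly the point the paper leaves implicit when it cites Claim~\ref{claim:induced2}.
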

\begin{proof}
Recall that $w(I)=\sum_{v \in W_{low}} R_v$. Thus,
\begin{align*}
\mathbb{E}[w(I)]&=\mathbb{E}\Big[\sum_{v\in W_{low}}R_v\Big]\\
&=\sum_{v\in W_{low}}\mathbb{E}[R_v]\\
&\geq\sum_{v\in W_{low},|N^2(v)|\geq \alpha}\mathbb{E}[R_v]\\
&\geq \sum_{v\in W_{low},|N^2(v)|\geq \alpha}\alpha&\text{(by \cref{claim:exp})}\\
&\geq \alpha\cdot|W_{low}|/2&\text{(since iteration $I$ is of type high degree)}\\
&\geq \alpha\cdot |B_{high}\cup W|/4 &\text{(by \cref{claim:induced2})}\\
&=\alpha\cdot |V(G_{bi})|/4. 
\end{align*}
\end{proof}

Before combining the above upper and lower bounds, we need to show that our function $A(I)$ accurately measures the progress of our algorithm by proving the following properties:

\begin{claim}$ $\label{claim:pot}
\begin{enumerate}
\item If $I'$ is the iteration right after $I$, then $A(I')\leq A(I)-w(I)$.
\item $A(I)>0$.

\end{enumerate}
\end{claim}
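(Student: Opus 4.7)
The plan is to deduce both items from a single, uniform bound: every edge $e \in E(G[W \cup B_{high}])$ always satisfies $w(e) < 2\delta\alpha$ (at every point of the execution). Given this bound, item 2 follows immediately provided that $E(G[W \cup B_{high}])$ contains at least one edge during iteration $I$. And for item 1, since $W \cup B_{high}$ only shrinks across iterations by \cref{claim:enter3}, the work reduces to controlling the drop in $A$ coming from edges that leave the graph, which the bound $w(e) < 2\delta\alpha$ is tailor-made for.

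To prove the key bound on $w(e)$, I would fix an edge $e = (u,y) \in E(G_{bi})$ with $u \in B_{high} \cup W_{high}$ and $y \in W_{low}$, and count how many times some $v \in W_{low}$ can cause $w(e)$ to be incremented. By the definition of $E^2(v)$, every such $v$ (including the case $v = y$) must lie in $N_{G_{bi}}(u)$ at the moment it is moved to $D$ on \cref{line:moved}. But that same move causes $v$ to send an \textsc{increment counter} message on \cref{line:inc} to its neighbors in $G$, and $u$ is one of them. Hence each increment of $w(e)$ is accompanied by an increment of $c_u$, and \cref{claim:counter} gives $c_u < 2\delta\alpha$, so $w(e) < 2\delta\alpha$.

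With that in hand, item 1 is essentially bookkeeping. Let $E_{\text{old}}$ and $E_{\text{new}}$ denote $E(G[W \cup B_{high}])$ at iterations $I$ and $I'$, with edge-weight functions $w$ and $w'$ respectively; by \cref{claim:enter3}, $E_{\text{new}} \subseteq E_{\text{old}}$. Split
\[
A(I) - A(I') \;=\; \sum_{e \in E_{\text{old}} \setminus E_{\text{new}}} (2\delta\alpha - w(e)) \;+\; \sum_{e \in E_{\text{new}}} (w'(e) - w(e)).
\]
For each departing edge $e \in E_{\text{old}} \setminus E_{\text{new}}$, the bound above gives $w'(e) < 2\delta\alpha$, hence $2\delta\alpha - w(e) \geq w'(e) - w(e)$; so the right-hand side is at least $\sum_{e \in E_{\text{old}}} (w'(e) - w(e))$, which equals $w(I)$ because every weight increment during iteration $I$ lands on an edge of $G_{bi} \subseteq G[W \cup B_{high}]$ that was already present at the start of $I$ (no edge can appear mid-iteration, again by \cref{claim:enter3}). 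For item 2, since $I$ is of type high degree, there exists $v \in W_{low}$ with $|N^2(v)| > \alpha \geq 1$, which produces a vertex $u \in N_{G_{bi}}(v)$ with a further neighbor $y \in N_{G_{bi}}(u)$; the edge $(u,y)$ lies in $E(G_{bi}) \subseteq E(G[W \cup B_{high}])$ and contributes at least $2\delta\alpha - w(e) \geq 1$ to $A(I)$.

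The main obstacle is the $w(e) < 2\delta\alpha$ bound: it requires picking the correct ``witness endpoint'' $u$ on the $B_{high} \cup W_{high}$ side of the bipartition and spotting the lockstep correspondence between weight increments on $e$ and counter increments on $c_u$. Once that coupling is identified the rest is mechanical; without it, one would naively try to bound $w(e)$ by the degree of an endpoint in $G$, which is hopeless in bounded-arboricity graphs where individual degrees can be as large as $n-1$.
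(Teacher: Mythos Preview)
Your proof is correct and follows the same route as the paper: the heart of both arguments is the coupling $w(e)\le c_u$ (with $u$ the $B_{high}\cup W_{high}$ endpoint of $e$), combined with \cref{claim:counter}. Your treatment of item~1 is in fact more explicit than the paper's terse one-liner, and your use of the type-high-degree hypothesis to exhibit an actual edge in item~2 closes a small gap the paper leaves open (it argues only that each summand is positive, not that the sum is nonempty).
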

\begin{proof}
For item 1, it suffices to observe that the weight of an edge can only increase and the set of edges we sum over in the definition of $A(I)$ can only decrease. This is true because by \cref{claim:enter3}, no vertex can ever enter $W\cup B_{high}$ from another set.

For item 2, it suffices to show that $2\delta\alpha-w(e)$ is always positive.
Suppose for contradiction that there is an edge $(u,v)$ with $w(u,v)\geq2\delta\alpha$. Consider the point at which $w(u,v)$ was incremented to $\delta\alpha$. Consider $G_{bi}$ at this point in time. Only the weight of edges in $G_{bi}$ can be incremented, so $(u,v)\in E(G_{bi})$. Without loss of generality, $u\in B_{high}\cup W_{high}$ and $v\in W_{low}$. Ever since the edge $(u,v)$ entered $G_{bi}$, $u$ has been in $B_{high}\cup W_{high}$ and $v$ has been in $W_{low}$, since no vertex in $W_{low}$ can later be in $B_{high}\cup W_{high}$ by \cref{claim:enter3}. Each time $w(u,v)$ is incremented, it is caused by an \textsc{increment counter} message sent by some vertex $y$ that moved from $W_{low}$ to $D$, for which $(u,v)\in E^2(y)$. Then since $y$ was in $W_{low}$, $u\in B_{high}\cup W_{high}$, and $(u,v)\in E^2(y)$, we have $y\in N(u)$. Thus $y$ sends \textsc{increment counter} to $u$. Therefore, every time $w(u,v)$ is incremented, $c_u$ is also incremented. Since $w(u,v)=2\delta\alpha$, we have $c_u=2\delta\alpha$, which is a contradiction by \cref{claim:counter}. 
\end{proof}

We are now ready to put everything together to complete the analysis. For all $j$, let $I_j$ denote the $j^{th}$ iteration of type high degree. Then, by \cref{claim:available} we have that for all $j$, $A(I_j)\leq 2\delta\alpha^2\cdot|V(G_{bi})|$, and by \cref{claim:expi} we have that for all $j$, $\mathbb{E}[w(I_j)]\geq \alpha\cdot |V(G_{bi})|/4$ (where $G_{bi}$ is taken to be its value at the beginning of iteration $I_j$). Thus, $\mathbb{E}[w(I_j)]\geq A(I_j)/(8\delta\alpha)$.

Thus, by item 1 of \cref{claim:pot}, we have that if for all $j$, $A(I_{j+1})\leq A(I_j)-w(I_j)$, so for all $j$, we have
\begin{align*}
    \mathbb{E}[A(I_{j+1})]&\leq \mathbb{E}[A(I_j)-w(I_j)]\\
    &\leq (1-\frac{1}{8\delta\alpha})\mathbb{E}[A(I_j)]\\
    &\leq (1-\frac{1}{8\delta\alpha})^j\cdot A(I_1)\\
    &= (1-\frac{1}{8\delta\alpha})^j\cdot m(2\delta\alpha)\\
    &=(1-\frac{1}{32\alpha})^j\cdot 8\alpha^2n.
\end{align*}

We note that the above inequalities use the fact that the randomness in each iteration is independent. 

For any constant positive integer $c$, by setting $j=50c\alpha\log n$, we have $\mathbb{E}[A(I_{j+1})]<1/n^c$. By Markov's inequality and the fact that the available weight is always integral and non-negative, we have that $A(I_j)=0$ with high probability. By item 2 of \cref{claim:pot}, the algorithm terminates before it reaches an iteration $I^*$ with $A(I^*)=0$. Thus, the number of iterations of type high degree is $O(\alpha\log n)$ with high probability.


\subsection{Handling Unknown arboricity}\label{sec:unknown}

All of our distributed algorithms so far have assumed that $\alpha$ is known to each processor but that $n$ is unknown. In this section we will show that all of our distributed algorithms (\cref{thm:improved},  \cref{thm:local}, and \cref{thm:congest}) still work if the arboricity $\alpha$ is unknown to each processor, but $n$ is known. 

Let $\mathcal{A}$ be any one of our three distributed algorithms.  Let $R_{\mathcal{A}}^n(\alpha)$ be the upper bound on the number of rounds of algorithm $\mathcal{A}$ for a graph on $n$ vertices of arboricity $\alpha$ when alpha is known, where this upper bound is provided by \cref{thm:improved},  \cref{thm:local}, or \cref{thm:congest} depending on which algorithm $\mathcal{A}$ refers to. We will run $\mathcal{A}$ guessing at most $\log n$ different values for $\alpha$. For all $i$ from $0$ to $\log n$, let $\mathcal{A}_i$ be $\mathcal{A}$ run with $2^i$ as the guessed value of $\alpha$, where we stop after $R_{\mathcal{A}}^n(2^i)$ rounds. We note that each processor can compute the value of $R_{\mathcal{A}}^n(2^i)$ since $n$ is known.

The algorithm is as follows. We do the following until every vertex is dominated (i.e. until $W$ is empty). For each $i$ from $0,1,2\dots$ in order, we run $\mathcal{A}_i$. Importantly, we begin running $\mathcal{A}_i$ with the conditions present at the end of $\mathcal{A}_{i-1}$. That is, we begin $\mathcal{A}_i$ with the partition into $D$, $B$, and $W$ from the end of $\mathcal{A}_{i-1}$, and the values of all counters $c_v$ from the end of $\mathcal{A}_{i-1}$. 

We will now discuss the number of rounds. The goal is to show that the number of rounds is $O(R_{\mathcal{A}}^n(\alpha))$. If the algorithm terminates right after $\mathcal{A}_j$, then by definition the number of rounds is $\sum_{i=0}^{j}R_{\mathcal{A}}^n(2^i)$. Since $R_{\mathcal{A}}^n$ has at least linear dependence on its input for all of our algorithms $\mathcal{A}$, this is a geometric series with sum $O(R_{\mathcal{A}}^n(2^j))$. Thus, 
it suffices to show that the algorithm terminates (i.e. $W$ becomes empty) by the time $\mathcal{A}_{\lceil \log \alpha \rceil}$ terminates. This would be true by definition if $\mathcal{A}_{\lceil \log \alpha \rceil}$ were run from scratch on the original graph, without the initial conditions imposed by the previous $\mathcal{A}_i$s. 
When we do have these initial conditions however, we are instead running $\mathcal{A}_{\lceil \log \alpha \rceil}$ on an instance of the problem where some vertices have already been added to $D$. That is, $\mathcal{A}_{\lceil \log \alpha \rceil}$ is given an initial partition into $D\cup B \cup W$ and we only require $\mathcal{A}_{\lceil \log \alpha \rceil}$ to add vertices in $D$ to dominate the vertices in $W$ (i.e. the vertices that are not yet dominated). In other words, we are running $\mathcal{A}_{\lceil \log \alpha \rceil}$ on a graph where some progress is built into the initial conditions, so it is intuitive this algorithm satisfies the same running time guarantees as running $\mathcal{A}_{\lceil \log \alpha \rceil}$ from scratch, but this needs to be formally verified. Indeed, one can verify that all of our claims in the running time analyses of our three algorithms still hold given this initial condition. That is, the entire argument for each of our algorithms can be applied verbatim, after replacing $\alpha$ with $2^{\lceil \log \alpha \rceil}$.
As a result, our algorithm terminates by the time $\mathcal{A}_{\lceil \log \alpha \rceil}$ terminates.



We will now analyze the approximation factor. Let $OPT$ be a minimum dominating set. For all $i$, let $W_i$ be the value of the set $W$ right before executing $\mathcal{A}_i$. We define $OPT_i$ as the smallest set of vertices that dominates $W_i$. By definition, for all $i$ we have $OPT_i\leq OPT$. Let $D_i$ be the set of vertices added to $D$ during the execution of $\mathcal{A}_i$. Our goal is to show that for all $i$, $|D_i|=O(2^i\cdot|OPT_i|)$. The consequence of this is that since the algorithm terminates by the time $\mathcal{A}_{\lceil \log \alpha \rceil}$ terminates, we have that $D=O(\sum_{i=0}^{\log \alpha } 2^i\cdot|OPT_i|)=O(  \sum_{i=0}^{\log \alpha } 2^i\cdot|OPT|)=O(\alpha\cdot|OPT|)$. 

We will apply the analysis of correctness from our centralized algorithm (\cref{sec:approx}) separately for each $\mathcal{A}_i$. Fix $i$. The analysis of correctness of the centralized algorithm partitions $D$ into two sets: $D_{active}$ and $D_{passive}$. Let $D_{i_{active}} = D_i\cap D_{active}$ and let $D_{i_{passive}} = D_i\cap D_{passive}$. The proof of correctness for our centralized algorithm has two claims which together imply that $|D|=O(\alpha\cdot|OPT|)$:~\cref{claim:active} ($|D_{active}|\leq 2\delta\alpha\cdot|OPT|$), and~\cref{claim:passive} ($|D_{passive}|\leq|D_{active}|$). Both of these claims remain true for the analysis of $\mathcal{A}_i$ up to constant factors, after replacing $D_{active}$ and $D_{passive}$ with $D_{i_{active}}$ and $D_{i_{passive}}$ respectively, replacing $|OPT|$ with $|OPT_i|$, and replacing $\alpha$ with $2^i$. After these replacements, the proof of~\cref{claim:active} is identical to its original proof (except if the algorithm $\mathcal{A}$ is from \cref{thm:improved} then each counter $c_v$ is at most $2\delta\alpha$ instead of $\delta\alpha$). In particular, the original proof uses the fact that each vertex in $D_{active}$ is adjacent to a vertex in $OPT$, while the proof for $\mathcal{A}_i$ uses the fact that each vertex in $D_{i_{active}}$ is adjacent to a vertex in $OPT_i$. The proof of~\cref{claim:passive} for $\mathcal{A}_i$ has a slightly different constant factor than the original. In particular, in the original proof of~\cref{claim:passive}, we show that every vertex in $D_{passive}$ has at least $\delta\alpha$ neighbors in $D_{active}$, while every vertex in $D_{active}$ has at most $\delta\alpha$ neighbors in $D_{passive}$. The latter is still true in $\mathcal{A}_i$, but the bound for the former becomes $\delta\alpha/2$ since the counter of a vertex that reaches $\delta\alpha$ could have reached at most $\delta\alpha/2$ during $\mathcal{A}_1, \dots, \mathcal{A}_{i-1}$, and thus, must have incremented at least $\delta\alpha/2$ times during $\mathcal{A}_i$.






\paragraph{Acknowledgements} The authors would like to thank Yosi Hezi and Quanquan Liu for fruitful discussions. We would also like to thank Neal E. Young and Kent Quanrud for correspondence about their prior work.

\bibliographystyle{alpha}
\bibliography{ref}
\end{document}